\documentclass[pra,twocolumn,a4paper,nofootinbib, superscriptaddress]{revtex4-1}

\usepackage{hyperref}
\usepackage{ulem}
\usepackage{graphicx}
\usepackage{bm}
\usepackage{amsmath}
\usepackage{amssymb}
\usepackage{amsthm}
\usepackage{color}

\theoremstyle{plain}% default
\newtheorem{thm}{Theorem}[section]
\newtheorem{lem}[thm]{Lemma}

\theoremstyle{definition}

\theoremstyle{remark}

\definecolor{nblue}{rgb}{0.2,0.2,0.7}
\definecolor{ngreen}{rgb}{0.1,0.5,0.1}%272
\definecolor{nred}{rgb}{0.8,0.2,0.2}%711
\definecolor{nblack}{rgb}{0,0,0}

\newcommand{\unit}{I}
\newcommand{\tr}[1]{\text{Tr}\left(#1\right)}
\newcommand{\K}{\mathcal{K}}
\renewcommand{\O}{\mathcal{O}}
\newcommand{\Z}{\mathbb{Z}}
\newcommand{\SN}{\mathcal{S}^N}
\newcommand{\SCHSH}[1]{\mathcal{S}^{\mbox{\tiny CHSH}}_{#1}}
\newcommand{\I}{\mathcal{I}}
\newcommand{\ket}[1]{|#1\rangle}
\newcommand{\bra}[1]{\langle #1 |}
\newcommand{\pNL}{p_\text{NL}}
\newcommand{\pM}{p_\text{MABK}}
\newcommand{\pW}{p_\text{WWZB}}
\newcommand{\GHZ}{\ket{\Psi_N}}
\newcommand{\GHZrho}{\ket{\Psi_N}\!\bra{\Psi_N}}
\newcommand{\proj}[1]{\ket{#1}\!\bra{#1}}

\newcommand{\N}{\mathcal{N}}
\newcommand{\nn}{{n_0}}

\begin{document}

\title{Generating nonclassical correlations without fully aligning measurements}

\author{Joel J. \surname{Wallman}}
\affiliation{School of Physics, The University of Sydney,
Sydney, New South Wales 2006, Australia}
\author{Yeong-Cherng \surname{Liang}}
\affiliation{School of Physics, The University of Sydney,
Sydney, New South Wales 2006, Australia}
\affiliation{Group of Applied Physics, University of Geneva, CH-1211 Geneva 4, Switzerland}
\author{Stephen D. \surname{Bartlett}}
\affiliation{School of Physics, The University of Sydney,
Sydney, New South Wales 2006, Australia}

\date{\today}

\begin{abstract}
We investigate the scenario where spatially separated parties perform measurements in randomly chosen bases on an $N$-partite Greenberger-Horne-Zeilinger state. We show that without any alignment of the measurements, the observers will obtain correlations that violate a Bell inequality with a probability that rapidly approaches 1 as $N$ increases and that this probability is robust against noise. We also prove that restricting these randomly chosen measurements to a plane perpendicular to a common direction will \textit{always} generate correlations that violate some Bell inequality. Specifically, if each observer chooses their two measurements to be locally orthogonal, then the $N$ observers will violate one of two Bell inequalities by an amount that increases exponentially with $N$. These results are also robust against noise and perturbations of each observer's reference direction from the common direction.
\end{abstract}

\pacs{}

\maketitle

\section{Introduction}

Entangled quantum states can yield correlations between spatially separated measurements that are inconsistent with any locally causal theory~\cite{Bell1964, bell2004}. These nonlocal (or nonclassical) correlations are a resource~\cite{barret2005a, brunner2005} for a range of information processing tasks such as quantum key distribution~\cite{ekert1991, barret2005b, acin2006, masanes2009}, teleportation~\cite{bennet1993,horodeckis1996}, certification and expansion of randomness~\cite{BIV:Randomness,R.Colbeck:1011.4474} and the reduction~\cite{brukner2004} of communication complexity~\cite{CC}.

These nonlocal correlations are correlations of measurement outcomes. As such, they are not solely a consequence of
entanglement but also depend upon the choice of measurements.This point was emphasised by Bell in his seminal work~\cite{Bell1964}, where he showed that the perfect correlations exhibited by the spin-singlet state do admit a simple locally causal model (see also Ref.~\cite{BIVvsEnt} and references therein).

The demonstration of nonlocal correlations typically employs carefully chosen measurements whose implementation requires the spatially separated observers to share a complete reference frame~\cite{aspect2004,bartlett2007a}. To circumvent this requirement, observers who do not initially share a complete reference frame could share a particular state that is invariant under arbitrary rotations of the local reference frames~\cite{cabello2003}; however, the state preparation involved is relatively complicated. Alternatively, they could use correlated quantum systems to establish a shared reference frame which can then be used to align measurements~\cite{rff}; however, this approach is resource-intensive as it requires coherently exchanging many entangled quantum systems.

Recently, it has been shown that such methods are not required to demonstrate violations of a Bell inequality~\cite{ycliang2010}. In particular, for $N$ spatially separated observers that share a Greenberger-Horne-Zeilinger (GHZ) state, it was found that most choices of measurements lead to nonlocal correlations between measurement outcomes~\cite{ycliang2010}. Therefore distant observers can randomly choose measurements that violate some Bell inequality with a probability that approaches 1 as $N$ increases. However, the successful detection of nonlocal correlations in this scenario requires checking the measurement statistics against a set of Bell inequalities that grows exponentially in $N$.

In this paper, we show that for $N$ observers who share a GHZ state and are able to perfectly align a single measurement direction (but do not share a full reference frame), any choice of measurements satisfying a local constraint will generate nonlocal correlations. Furthermore, in contrast to the results of Ref.~\cite{ycliang2010}, verifying that these correlations are nonlocal only involves testing the measurement statistics against two Bell inequalities, thereby simplifying the verification process. Moreover, we show that as $N$ increases, the amount by which the observers violate one of the two Bell inequalities increases exponentially and, in the worst-case scenario, is a constant factor below the maximum violation permitted by quantum mechanics.

We also investigate the robustness of the above-mentioned results and those presented in Ref.~\cite{ycliang2010} in the presence of uncorrelated local noise. We will demonstrate that the ability of observers to obtain correlations that violate some Bell inequality is robust against depolarizing or dephasing noise whether or not they share a direction. Finally, we show that even if the observers cannot perfectly align a measurement direction and so have randomly perturbed approximations to the common direction, they can still always obtain measurement statistics that violate one of two Bell inequalities.

This paper is structured as follows. We begin in Sec.~\ref{sec:preliminaries} by illustrating these results in the simplest case, namely, when there are two observers who share a Bell state. We then outline the generalization to $N$ parties, discuss the relevant Bell inequalities and the methods of sampling random measurements that are used in this paper. In Sec.~\ref{sec:Noiseless}, we consider the ideal case in which the $N$ parties share a state without any noise and can also share a reference direction perfectly. In Sec.~\ref{Sec:Decoherence}, we relax the first assumption and show that the probability of violating a Bell inequality is robust against uncorrelated depolarizing or dephasing noise. In Sec.~\ref{Sec:Perturbed} we also show that the probability of violating a Bell inequality is robust against random perturbations in the shared direction. In Sec.~\ref{sec:conclusion} we discuss the implications of these results and offer some concluding remarks.

\section{Preliminaries}\label{sec:preliminaries}

In this section, we outline the simplest example, namely, when two spatially separated parties each perform two binary-outcome measurements on the Bell state $\ket{\Phi^+}$. For this case, we define the probability of violating a Bell inequality and review the results presented in Ref.~\cite{ycliang2010} for randomly chosen measurements without any shared reference frame. We then show that if both observers perform locally orthogonal random measurements in the $xy$-plane of the Bloch sphere, they will always obtain correlations that violate a Clauser-Horne-Shimony-Holt (CHSH) Bell inequality~\cite{bell2004, CHSH}. We then outline the general scenario for $N$ parties and discuss the relevant Bell inequalities. We conclude this section by explaining the various samplings of measurement bases employed in this paper.

\subsection{A two-party example}\label{sec:2party_example}

For the two-party case, a verifier prepares many copies of the maximally entangled Bell state
\begin{equation}
\ket{\Phi^{+}} = \frac{1}{\sqrt{2}}\left(\ket{0}\ket{0} + \ket{1}\ket{1}\right)	\,,\label{eq:bell_state}
\end{equation}
where $\ket{0}$, $\ket{1}$ are the computational basis states, and distributes one qubit to each of two observers. Both observers choose two measurement bases. For each copy of the Bell state, the observers each randomly choose and perform one of their two measurements on their qubit. The observers then send the verifier a list of the measurement choice (a binary digit $s_k$) and the corresponding outcome $o^k_{s_k}=\pm1$ for each qubit $k$. The verifier uses the lists to determine if the measurement outcomes are inconsistent with a locally causal theory. The verifier does this by calculating the probabilities $p(o^1_{s_1}=o^2_{s_2})$ (as relative frequencies) that the outcomes satisfy $o^1_{s_1} = o^2_{s_2}$ given a specific choice of $s_1$ and $s_2$. They then determine the correlation functions
\begin{align}
E\left(s_1,s_2\right) &= p(o^1_{s_1}=o^2_{s_2}) - p(o^1_{s_1}=-o^2_{s_2})	\nonumber\\
&= 2p(o^1_{s_1}=o^2_{s_2}) - 1	\,,
\end{align}
and seek to determine if the correlation functions are consistent with a locally causal model. For two parties, the correlation functions are inconsistent with a locally causal theory if they violate the standard CHSH~\cite{bell2004, CHSH} Bell inequality
\begin{equation}
\SCHSH{1}=|E(0,0)+E(0,1)+E(1,0)-E(1,1)|\leq 2\,.\label{eq:CHSH}
\end{equation}
However, the choice of the labels for the measurements (i.e., which measurement is labelled by $s_k = 0$) is arbitrary, as is the labeling of the measurement outcomes $o^k_{s_k}$. Therefore the correlation functions are also inconsistent with any locally causal theory if they violate inequality~\eqref{eq:CHSH} after any relabeling of the $s_k$ and/or $o^k_{s_k}$ and/or the label $k$. We will follow the terminologies of Refs.~\cite{masanes2003,collins2004} and refer to two inequalities that can be obtained from one another through such relabeling as being \textit{equivalent}.

There are four equivalent inequalities that can be obtained from Eq.~\eqref{eq:CHSH} by mapping $(s_1,s_2)$ to $(s_1,s_2)$, $(s_1\oplus1,s_2)$, $(s_1,s_2\oplus1)$ or $(s_1\oplus1,s_2\oplus1)$, namely,
\begin{align}\label{eq:CHSH4}
\SCHSH{1}=|E(0,0)+E(0,1)+E(1,0)-E(1,1)|	&\leq 2	\,,\nonumber\\
\SCHSH{2}=|E(0,0)-E(0,1)+E(1,0)+E(1,1)|	&\leq 2	\,,\nonumber\\
\SCHSH{3}=|E(0,0)+E(0,1)-E(1,0)+E(1,1)|	&\leq 2	\,,\nonumber\\
\SCHSH{4}=|-E(0,0)+E(0,1)+E(1,0)+E(1,1)|	&\leq 2	\,.
\end{align}
All permutations of the $o^k_{s_k}$ and $k$ map Eq.~\eqref{eq:CHSH} to one of the four inequalities in Eq.~\eqref{eq:CHSH4}, so the four inequalities in Eq.~\eqref{eq:CHSH4} are the complete set of Bell inequalities equivalent to the standard CHSH inequality. This set is referred to as the class of CHSH Bell inequalities. For two parties, the correlation functions are consistent with a locally causal theory if and only if they satisfy all inequalities in the class of CHSH Bell inequalities~\cite{A.Fine:PRL:1982}.

To see that quantum mechanics predicts violations of the CHSH inequalities, first note that for a quantum state $\rho$ and observables $\O^k_{s_k} ={\Omega}^k_{s_k}\cdot\vec{\sigma}$, where $\vec{\sigma}= \left(\sigma_x, \sigma_y, \sigma_z \right)$ is the vector of Pauli matrices and
\begin{equation}\label{eq:measurements}
	{\Omega}^k_{s_k} = \left(\sin\theta^k_{s_k}\cos\phi^k_{s_k}, \sin\theta^k_{s_k}\sin\phi^k_{s_k},
\cos\theta^k_{s_k}\right)	\,,
\end{equation}
the correlation functions are
\begin{equation}\label{eq:general_correlations}
E(s_1,s_2) = \tr{\rho(\O^1_{s_1}\otimes\O^2_{s_2})}	\,.
\end{equation}
For the Bell state $\rho=\proj{\Phi^+}$ defined in Eq.~\eqref{eq:bell_state}, Eq.~\eqref{eq:general_correlations} becomes
\begin{align}\label{eq:singlet_correlation}
E(s_1,s_2) =	\cos\theta^1_{s_1}\cos\theta^2_{s_2}	+ \sin\theta^1_{s_1}\sin\theta^2_{s_2}\cos\left( \phi^1_{s_1} + \phi^2_{s_2} \right)	\,.
\end{align}

If, for example, the measurements correspond to the vectors
\begin{align}\label{eq:measurements_2_party}
\Omega^1_{0} &= \left(1, 0, 0\right)\,, &
\Omega^1_{1} &= \left(0, 1, 0\right)\,,	\nonumber\\
\Omega^2_{0} &= \frac{1}{\sqrt{2}}\left(1, 1, 0\right)\,,&
\Omega^2_{1} &= \frac{1}{\sqrt{2}}\left(-1, 1, 0\right)	\,,
\end{align}
then the corresponding correlation functions are
\begin{align}
E(0,0) &= \frac{1}{\sqrt{2}}\,,&
E(0,1) &= -\frac{1}{\sqrt{2}}\,, 	\nonumber\\
E(1,0) &= -\frac{1}{\sqrt{2}}\,,&
E(1,1) &= -\frac{1}{\sqrt{2}}	\,. 
\label{Eq:SpecificCorFn}
\end{align}
Substituting these into Eq.~\eqref{eq:CHSH4}, one finds that $\SCHSH{4}=2\sqrt{2}>2$, i.e., the CHSH inequality is violated.

\subsubsection{Bell inequality violations with no aligned directions}

From Eqs.~\eqref{eq:CHSH4} and \eqref{eq:singlet_correlation}, it is clear that the choices of measurements that generate CHSH-inequality-violating correlations must satisfy some constraints, i.e., the directions that correspond to the measurements must be aligned in particular ways. However, if the observers do not share a reference frame, directions satisfying such constraints can only be chosen probabilistically. If the measurement directions $\Omega^{k}_{s_k}$ are sampled according to the normalized measure $\text{d}\Omega^{k}_{s_k}$, then the observers will choose measurements that generate correlations inconsistent with any locally causal theory with probability
\begin{equation}\label{Eq:Dfn:ProbViolation2}
 p = \int\,f_{\mbox{\tiny CHSH}}\left(\{\Omega^1_{0},\Omega^1_{1},\Omega^2_{0},\Omega^2_{1}\}\right)\text{d}\Omega^{1}_{0}\text{d}\Omega^{1}_{1}\text{d}\Omega^{2}_{0}\text{d}\Omega^{2}_{1}	\,,
\end{equation}
where $f_{\mbox{\tiny CHSH}}\left(\{\Omega^1_{0},\Omega^1_{1},\Omega^2_{0},\Omega^2_{1}\}\right)$ is a function that returns 1 if the orientations $\{\Omega^1_{0},\Omega^1_{1},\Omega^2_{0},\Omega^2_{1}\}$ generate correlations that violate any of the CHSH Bell inequalities and 0 otherwise. The probability with which the observers generate correlations inconsistent with any locally causal theory depends on the way they choose their measurements, which in turn depends on how much they can align their reference frames. For example, if the observers can completely align their measurement bases, they can simply choose the measurements \eqref{eq:measurements_2_party}, so $p = 1$, i.e., they always generate correlations inconsistent with any locally causal theory.

When the observers cannot align their measurements at all and randomly choose both of their measurements independently and isotropically, then the probability that they will choose measurements that generate correlations violating one of the CHSH inequalities is $\approx28.3\%$~\cite{ycliang2010}. However, if the observers choose their measurements to be locally orthogonal, i.e.,
\begin{align}
\Omega^1_{0} \cdot \Omega^1_{1}	= 0\,,\quad
\Omega^2_{0} \cdot \Omega^2_{1}	= 0	\,,
\end{align}
then the probability of generating correlations that violate a CHSH inequality increases to $\approx41.3\%$~\cite{ycliang2010}. 

\subsubsection{Bell inequality violations with one aligned direction}

Consider another possible scenario, in which the observers can align one direction of their measurements, e.g., the $z$ direction of the Bloch sphere. Then each observer can choose two orthogonal measurements in the $xy$-plane, i.e., choose two angles $\phi_1$ and $\phi_2$ randomly according to a uniform distribution. The four corresponding measurements are
\begin{align}
\Omega^0_{0} &= \left(\cos\phi_1, \sin\phi_1, 0\right)\,,	\nonumber\\
\Omega^0_{1} &= \left(-\sin\phi_1, \cos\phi_1, 0\right)\,,	\nonumber\\
\Omega^1_{0} &= \left(\cos\phi_2, \sin\phi_2, 0\right)\,,	\nonumber\\
\Omega^1_{1} &= \left(-\sin\phi_2, \cos\phi_2, 0\right)	\,.
\end{align}
Substituting these into Eq.~\eqref{eq:singlet_correlation} and then Eq.~\eqref{eq:CHSH4} gives
\begin{align}\label{eq:2parties_PROM}
\SCHSH{1}&=2|\cos\left(\phi_1 + \phi_2\right) - \sin\left(\phi_1 + \phi_2\right)|, \nonumber\\
\SCHSH{2}&=\SCHSH{3}=0,\nonumber\\
\SCHSH{4}&=2|\cos\left(\phi_1 +  \phi_2\right) + \sin\left(\phi_1 + \phi_2\right)|	\,.
\end{align}
Using standard trigonometric identities, we see that the CHSH inequalities in Eq.~\eqref{eq:CHSH4} are satisfied if and only if 
\begin{equation}
|\cos{x}|\leq \frac{1}{\sqrt{2}}\,,\quad 
|\sin{x}|\leq \frac{1}{\sqrt{2}}	\,,
\end{equation}
where $x = \phi_1 + \phi_2 + \frac{\pi}{4}$. But one of these inequalities is violated unless $x=\frac{\pi}{4}$, so any choice of measurements (except for a set of measure zero) will violate one of two CHSH inequalities, i.e., $p = 1$. Therefore in order to choose measurements that generate correlations inconsistent with any locally causal theory, it is sufficient to perfectly align a single direction, namely the $z$ axis, and to check only two Bell inequalities.

\subsection{The general scenario}\label{sec:general}

We now generalize the two-party case outlined in the previous section to $N$ parties and determine the extent to which the probability of generating correlations inconsistent with locally causal theories depends on the alignment of the measurements of the $N$ parties. To this end, we consider the scenario (abstracted from the physical implementation) wherein a verifier prepares a large number of copies of the $N$-partite GHZ state (the GHZ state is chosen as it is a resource for obtaining maximum violations of some commonly used Bell inequalities~\cite{WWproveWWZB,scarani2001}),
\begin{equation}
	\GHZ = \frac{1}{\sqrt{2}}\left(\ket{\vec{\textbf{0}}_N} + \ket{\vec{\textbf{1}}_N}\right)
\end{equation}
where $\ket{\vec{\textbf{0}}_N}$ and $\ket{\vec{\textbf{1}}_N}$ denote states in which each of the $N$ qubits are prepared in the states $\ket{0}$ and $\ket{1}$ respectively. The verifier distributes 1 qubit from each copy to $N$ observers. As in the two-party case, each observer chooses two measurement bases, which corresponds to the $k^{th}$ observer choosing two directions $\Omega^k_{s_k}$, parametrized as in Eq.~\eqref{eq:measurements}, in the Bloch sphere, where $s_k\in\Z_2 =\{0,1\}$. Each observer measures their qubits, randomly choosing $s_k$ for each qubit. The observers then send a list of the measurement labels $s_k$ and outcomes $\pm1$ for each copy back to the verifier, who will use the lists to determine if the measurement outcomes are inconsistent with a locally causal theory.

In contrast to the typical scenario where $\{\Omega^k_{s_k}\}$ are \textit{fixed a priori} to some optimal measurement bases that give the maximal violation of a specific labeling of a Bell inequality, we now consider a scenario where the measurement bases/directions $\{\Omega^k_{s_k}\}$ are chosen \textit{randomly} according to some distribution, but are fixed throughout the experiment. Formally, if we treat the measurement directions $\{\Omega^k_{s_k}\}$ as random variables, we can define the probability, $p^N_\I$, that the verifier identifies that the correlation functions are incompatible with the class of Bell inequalities $\I$ as
\begin{equation}\label{Eq:Dfn:ProbViolation}
 p^N_{\mathcal{I}} = \int\,f^N_\I\left(\{\Omega^k_{s_k}\}\right)
 \prod_{k=1}^{N}\prod_{s_k \in\Z_2}{\rm d}\Omega^{k}_{s_k}\,,
\end{equation}
where d$\Omega^{k}_{s_k}$ is the normalized measure associated with the sampling of measurement direction $\Omega^k_{s_k}$, and $f^N_\I\left(\{\Omega^k_{s_k}\}\right)$ is a function that returns 1 if the measurements $\{\Omega^k_{s_k}\}$ give rise to correlation functions that violate any of the Bell inequalities in the class $\mathcal{I}$ and 0 otherwise.

Clearly, $p^N_\I$ depends crucially on both the sampling of $\{\Omega^k_{s_k}\}$, which determines the probability of generating nonlocal correlations, and the class of Bell inequalities $\I$ involved, which determines the probability of the verifier \textit{identifying} nonlocal correlations as Bell-inequality-violating. For any given sampling of $\{\Omega^k_{s_k}\}$, $p_\I$ is thus upper bounded by $p_{\I_\text{all}}$=$\pNL$, where $\I_\text{all}$ refers to the complete set of Bell inequalities relevant to the scenario described above. The feasibility of demonstrating Bell inequality violation with randomly chosen measurement bases can then be quantified in terms of $\pNL$, which is the probability of randomly \textit{generating} correlations that are incompatible with any locally causal theory. We now discuss the method of identifying nonlocal correlations using an appropriate class of Bell inequalities.

\subsection{The Bell inequalities}
\label{sec:BI}

Bell inequalities are constraints on physically observable quantities that have to be satisfied by any locally causal theory~\cite{bell2004}. A relevant class of Bell inequalities for the scenario that we are considering is the set of $2^N$ Mermin-Ardehali-Belinski\v{\i}-Klyshko (MABK) inequalities~\cite{MABK1,MABK2}. A representative of this class is~\cite{ZBproveWWZB}
\begin{align}\label{eq:MABKold}
 \SN_1 &= \Bigl|\sum_{\vec{s} \in\Z_2^{\otimes
 N}}\beta\left(\vec{s}\right)E\left(\vec{s}\right)\Bigr|
 \le 2^{N}	\,,	
\end{align}
where $\vec{s} = \left(s_1,\ldots,s_N\right)$ is the vector of the $N$ measurement labels,
\begin{equation}\label{eq:betas}
 \beta\left(\vec{s}\right)=\!\!\!\sum_{\vec{a} \in\{-1,1\}^{\otimes N}}
 \sqrt{2}\cos\left[\frac{\pi}{4}(N+1-\sum_{j=1}^{N}
 a_j)\right]\prod_{l=1}^{N} a_l^{s_l}	\,,
\end{equation}
$\vec{a}= \left(a_1,\ldots,a_N\right)$ and the $N$-partite correlation functions $E\left(\vec{s}\right)$ are the expectation values of the product of the measurement outcomes when the $k^{th}$ observer performs the $s_k$-th measurement. Within quantum theory, the maximum possible value of $\SN_1$ is $2^{\frac{3N-1}{2}}$ \cite{MABK2, WWproveWWZB}.

As we show in Appendix A, this inequality can be rewritten as
\begin{equation}\label{Eq:MABK:S1}
 \mathcal{S}^N_{1}=\Bigl|\sum_{\vec{s} \in\Z_2^{\otimes N}}
 \beta\left(s,N\right)E\left(\vec{s}\right)\Bigr|
 \le2^N\,,
\end{equation}
where $s = \sum_{k=1}^N s_k$ and
\begin{equation}
 \beta\left(s,N\right) = 2^{\frac{N+1}{2}}\cos\left(\frac{\pi}{4}(1+N-2s)\right)	\,.	\label{eq:betanew}
\end{equation}

The equivalence class of MABK inequalities is the set of inequalities that can be obtained by permutating the measurement choices, $s_k$, measurement outcomes, $o_k$ and labeling of the observers $k$ in the coefficients $\beta(\vec{s})$ of inequality~\eqref{Eq:MABK:S1}. However, as we prove in Appendix~\ref{App:EquivalentMABK}, all such permuations can be obtained by permuting the measurement labels (i.e., $s_k\rightarrow 1-s_k$ for some set of $k\in\{1,2,\ldots,N\}$) and so each of the $2^N$ MABK inequalities can be obtained by one of the distinct $2^N$ permutations of measurement settings. In particular, the inequality
\begin{equation}\label{eq:MABK:S2}
 \mathcal{S}^N_2=\Bigl|\sum_{\vec{s} \in\Z_2^{\otimes N}}
 \beta\left(N-s,N\right)E\left(\vec{s}\right)\Bigr|
 \le2^N\,,
\end{equation}
which will play an important role in the scenario where a single direction is shared, can be obtained from inequality~\eqref{Eq:MABK:S1} via the mapping $s_k\rightarrow 1-s_k$ for all $k=1,2,\ldots, N$.

When $N=2$, the MABK inequalities reduce to the Bell-CHSH inequalities~\cite{CHSH} and represent the complete set of Bell inequalities for this scenario~\cite{A.Fine:PRL:1982}. So for $N=2$, the measurement outcomes are inconsistent with any locally causal theory if and only if they violate one of the MABK inequalities. For $N>2$, there are also other equivalence classes of Bell inequalities (see, for example, Refs.~\cite{C.Sliwa:PLA:165,WWproveWWZB,I.Pitowsky:PRA:014102}). An extensive set of such $N$-partite Bell inequalities that include the MABK class as a subset is the well-known Werner-Wolf-\.{Z}ukowski-Brukner (WWZB) inequalities~\cite{WWproveWWZB, ZBproveWWZB}. These $2^{2^N}$ Bell inequalities can be put into the form of the following single nonlinear Bell inequality
\begin{equation}\label{Ineq:nonlinear}
 S_\text{WWZB}=\sum_{\vec{a} \in\{-1,1\}^{\otimes N}}\Bigl|
 \sum_{\vec{s}\in\Z^{\otimes N}_2}
 \prod_{k=1}^N a_k^{s_k}E(\vec{s})\Bigr|\le2^{N}\,.
\end{equation}

Defining $\delta_N=1-N$~mod~2, the above inequality is both necessary and sufficient for the set of $2^N$ $N$-partite GHZ correlation functions [with measurements defined as in Eq.~\eqref{eq:measurements}]
\begin{equation}\label{Eq:GHZ:FullCorFn}
 E\left(\vec{s}\right)=\cos\left(\sum_{l=1}^N \phi^l_{s_l}\right)
 \prod_{k=1}^N\sin\,\theta^k_{s_k}+\delta_N\prod_{k=1}^N\cos\,\theta^k_{s_k}	\,,
\end{equation}
to be describable within a locally causal theory. However, not all measurement statistics are captured by these \textit{full} correlation functions. We can also compute the \textit{restricted} correlation functions of the GHZ state,
\begin{equation}\label{Eq:GHZ:ResCorFn}
 E\left(\{s_k\}_{k\in\K}\right)=
 \delta_{|\K|}\prod_{k\in\K}\cos\,\theta^k_{s_k}\,,
\end{equation}
$\K\subset\{1,2,\ldots,N\}$, which involve the expectation value of the product of the measurement outcomes for a subset of the $N$ parties. As a result, one generally needs to check the measurement statistics against the \textit{complete} set of Bell inequalities relevant to the particular experimental scenario to determine if these correlations are nonlocal. The characterization of the complete set of Bell inequalities is only known for $N=2$ and 3 (see Refs.~\cite{A.Fine:PRL:1982,C.Sliwa:PLA:165,I.Pitowsky:PRA:014102} for details).

Nevertheless, for small $N$, the problem of deciding if some given measurement statistics are compatible with a locally causal description can be solved numerically using \textit{linear programming}~\footnote{The set of locally causal correlations is a convex set with finitely many extreme points~\cite{I.Pitowsky:book}. To determine if some measurement statistics correspond to a member of this set, it suffices to check if the given measurement statistics can be written as a convex combination of these extreme points. For an alternative, but equivalent formulation of this problem as a linear program, see, for example, Refs.~\cite{D.Kaszlikowski:PRL:4418,M.B.Elliott:0905.2950} and the supplementary materials in Ref.~\cite{ycliang2010}.}. For larger values of $N$, it may become infeasible to compute $\pNL$ using linear programming. However, we can make use of the following inclusion relations:
\begin{equation}\label{Eq:BI:Inclusion}
 \{S_1^N\}\subset\{S_1^N, S_2^N\}\subset\text{MABK}\subseteq\text{WWZB}
 \subseteq\text{Complete Set}
\end{equation}
to lower bound this probability, i.e.,
\begin{equation}\label{Ineq:ProbViolation}
 p_{\{S_1^N\}}\le p_{\{S_1^N, S_2^N\}}\le \pM \le p_\text{WWZB}
 \le \pNL,
\end{equation}
where $p_\text{MABK}$ etc.\ are the probabilities defined in Eq.~\eqref{Eq:Dfn:ProbViolation} with $\mathcal{I}$ being the respective classes of Bell inequalities.

\subsection{Sampling of measurement directions}\label{sec:sampling}

Our sampling of measurement directions depends on the extent to which the $N$ observers are able to align their measurements within each physical scenario. For example, when all observers share a complete reference frame and can completely align their measurements, for any class of Bell inequalities $\I$, they can always pick $\{\Omega^k_{s_k}\}$ in such a way that $f^N_{\I}\left(\{\Omega^k_{s_k}\}\right)=1$, assuming there exist such measurements. In this paper, we assume that the observers either cannot align their measurements at all or can only align them with respect to a single direction $\vec{n}$. 

The following samplings of measurement directions will be applied to cater to the different extents in which the observers can align their measurements:
\begin{enumerate}
\item \textit{Random isotropic measurements (RIM)} -- each party $k$ chooses
 both directions $\Omega^{k}_{s_k}$ for
 $s_k=0,1$ independently and uniformly from the set of all possible
 directions;
\item \textit{Random orthogonal measurements (ROM)} -- each local pair of
 measurement directions is chosen to be orthogonal but otherwise
 uniform, i.e., RIM with the additional constraint:
 \begin{equation}\label{Eq:Dfn:ROM}
 	\Omega^k_0 \cdot \Omega^k_1 = 0 \quad \forall \ k\,;
 \end{equation}
\item \textit{Planar random orthogonal measurements (PROM)} -- in addition to Eq.~\eqref{Eq:Dfn:ROM}, all measurement directions are confined to a plane defined by some normal vector $\vec{n}$ (which corresponds to the common direction the observers can align), i.e.,
 \begin{equation}
 \Omega^{k}_{s_k}\cdot\vec{n}=0\quad \forall\ k,\ s_k\,,
 \end{equation}
 for some $\vec{n}$ shared by the $N$ parties.
\end{enumerate}

Some of the results presented in Sec.~\ref{Sec:Noiseless:ROM} for RIM and ROM have been discussed in Ref.~\cite{ycliang2010} but are included here in more detail.

\section{Noiseless scenarios}\label{sec:Noiseless}

When the $N$ experimenters do not align their measurements, one may expect that it is unlikely to find Bell-inequality-violating correlations by performing measurements in randomly chosen bases. Nonetheless, for the $N$-partite GHZ state, the probability of choosing measurements that violate a Bell inequality rapidly approaches 1 as $N$ increases. In Sec.~\ref{Sec:Noiseless:ROM} we briefly summarize the results for RIM and ROM presented in Ref.~\cite{ycliang2010} and analyze the difference between testing the correlations against the WWZB inequalities and testing the correlations against the full set of Bell inequalities (obtained for small $N$ using linear programming).

Without any alignment of measurements, if the experimenters do not test the experimental statistics against a class of Bell inequalities that grows exponentially with $N$, then the probability of identifying that the correlations generated by the randomly chosen measurements are inconsistent with any locally causal theory decreases with $N$. However, if the observers can align the $z$ direction of their measurements, then we prove that for all $N$ they can always choose measurements that violate one of two Bell inequalities, namely, $\SN_1$ or $\SN_2$ from Eqns.~\eqref{Eq:MABK:S1}--\eqref{eq:MABK:S2}, by an amount that grows exponentially with $N$. We also numerically calculate the probability of violating four different classes of Bell inequalities and show that as the aligned direction is rotated away from the $z$ axis, the observers have to test their experimental statistics against more Bell inequalities in order to certify that the correlations generated by the randomly chosen measurements are inconsistent with any locally causal theory.

\subsection{No aligned directions - RIM and ROM}\label{Sec:Noiseless:ROM}

\begin{table*}[t!]
\begin{ruledtabular}
\begin{tabular}{c||c|c|c|c|c|c|c|c|c|}
& \multicolumn{4}{c|}{RIM} & & \multicolumn{4}{c|}{ROM} \\ \hline
$N$ & $p_{\{S_1^N\}}$ & $\pM$ & $\pW$
& $\pNL$ & & $p_{\{S_1^N\}}$ & $\pM$ & $\pW$
& $\pNL$ \\ \hline
2 &7.080\% & 28.319\% & 28.319\% & 28.319\% & & 10.326\% & 41.298\% & 41.298\% & 41.298\%	\\
3 & 1.328\% & 10.002\% & 13.313\% & 74.690\% & & 2.324\% & 18.150\% & 26.604\% & 96.207\% \\
4 & 0.972\% & 13.410\% & 23.407\% & 94.238\% & & 1.714\% & 25.500\% & 59.034\% & 99.976\% \\
5 & 0.644\% & 15.210\% & 25.675\% & 99.593\% & & 1.108\% & 29.733\% & 52.798\% & 100.000\%\footnote{There are instances where a randomly generated correlation is local, but our simulation indicates that this happens less than 1 in every $10^6$ times. } \\
6 & 0.428\% & 16.879\% & 31.235\% & 99.965\% & & 0.734\% & 34.442\% & 71.190\% & 100.000\%\footnote{Of the $10^6$ randomly generated probability distributions, we did not find one that admits a locally causal description. } \\
8 & 0.183\% & 19.085\% & 37.509\% & - 	 & & 0.312\% & 41.935\% & 80.420\% & - \\
10 & 0.077\% & 20.443\% & 42.254\% & - 	 & & 0.133\% & 47.968\% & 86.926\% & - \\
15 & 0.009\% & 22.037\% & 50.515\% & - 	 & & 0.017\% & 59.006\% & 95.204\% & - \\
\end{tabular}
\end{ruledtabular}
\caption{\label{tbl:ProbViolation} Probability of finding a Bell inequality violation from the $N$-partite GHZ state for the scenario where each party is allowed to perform binary projective measurements in two randomly chosen measurement bases according to either RIM (left) or ROM (right). The number of parties $N$ is given in the leftmost column. Then, to the right, we have, respectively, the probability of violating $S_1^N$, the $2^N$ MABK inequalities, the $2^{2^N}$ WWZB inequalities, and the complete set of Bell inequalities relevant to this scenario. Note that the probability of violation for each class of Bell inequalities is lower bounded by the corresponding entry to its left, as expected from Eq.~\eqref{Ineq:ProbViolation}.}
\end{table*}

A natural strategy that the $N$ experimenters can adopt is to each randomly choose two independent measurement bases $\Omega^k_{s_k}$ according to a uniform distribution on the surface of a sphere. As can be seen from Eq.~\eqref{eq:measurements}, this corresponds to each observer randomly choosing 4 angles $\theta^k_{s_k}$ and $\phi^k_{s_k}$ for $s_k\in\Z_2$, where $\phi^k_{s_k}$ are chosen from a uniform distribution on the interval $[0,2\pi]$ and $\theta^k_{s_k}$ from the interval $[0,\pi]$ with $p(\theta) = \frac{1}{2}\sin\theta$. When the observers restrict their measurements to be orthogonal to each other (i.e., when they sample according to ROM), then Eq.~\eqref{Eq:Dfn:ROM} fixes one of the angles. 

The probability of violating 4 classes of Bell inequalities, namely, $\{\SN_1\}$, the $2^N$ MABK inequalities, the $2^{2^N}$ WWZB inequalities and the complete set of Bell inequalities for two binary-outcome measurements at each site, are presented in Table~\ref{tbl:ProbViolation}. For $N=2$ (and only for $N=2$), the MABK, WWZB and full set of Bell inequalities are all identical to the set of CHSH inequalities, so the probability of violating each of these three classes coincide for both RIM and ROM. 

For $N\ge3$, these three classes of inequalities obey the strict inclusion relations given in Eq.~\eqref{Eq:BI:Inclusion}, and we see that the probabilities of violating these three different classes follow the strict inequalities given in Eq.~\eqref{Ineq:ProbViolation}. For the MABK and WWZB inequalities, which contain a number of inequalities that is exponential in $N$, the probability of violation $p_\I$ generally increases with $N$, except when $N$ increases from 2 to 3 and a few other cases for ROM. This increasing trend is even more pronounced for the complete set of Bell inequalities where $\pNL$ is found to be strictly increasing (up to the limit of our analysis).

The WWZB inequalities are necessary and sufficient conditions for the full $N$-partite correlation functions to be consistent with a locally causal theory. Given that the restricted correlation functions can be computed from the respective reduced density matrices of $\GHZ$ and are always separable, it may seem surprising that the WWZB inequalities fail to detect a significant fraction of the nonclassical correlations generated from the GHZ states. However, while the reduced density matrices of $\GHZ$ are separable and so can be modelled in a locally causal theory, there is an additional requirement: the locally causal models for the different reduced density matrices must be consistent, in that they must not contradict one another and must also reproduce the full correlation functions given in Eq.~\eqref{Eq:GHZ:FullCorFn}. The results given in Table~\ref{tbl:ProbViolation} show that as $N$ increases, it becomes increasingly difficult to find a locally causal model that could simultaneously reproduce Eq.~\eqref{Eq:GHZ:FullCorFn} and Eq.~\eqref{Eq:GHZ:ResCorFn}.

The results from Table~\ref{tbl:ProbViolation} also imply that detecting nonclassical correlations with a probability that increases with $N$ requires a set of Bell inequalities containing a number of inequalities that is exponential in $N$. If we only use one MABK inequality, e.g., $\SN_1$, to detect these nonclassical correlations, then the probability of finding correlations that violate $\SN_1$ via ROM decreases exponentially as $N$ increases. Clearly, because each inequality in the same equivalence class can be obtained by adopting a different classical labeling, the probability of violating any one of the MABK inequalities is equal to $p_{\SN_1}$. Therefore the probability of violating one of a set of $M$ MABK inequalities is upper bounded by $M p_{\SN_1}$. As $p_{\SN_1}$ decreases exponentially with $N$, $M$ must increase exponentially with $N$ in order for the probability of violating one of a set of $M$ inequalities to either remain constant or increase. 

As we will demonstrate in the next section, this is not the case if the $N$ experimenters can align one of their measurement directions. In particular, we will show that if observers can align a measurement direction, then there is a set of two inequalities such that the probability of violating either of these two inequalities is one for all $N$.

\subsection{Partially aligned measurements - random measurements in the $xy$ plane}\label{sec:prom_xy}

Without any alignment of their measurements, observers need to check their experimental statistics against an exponentially large class of Bell inequalities to uncover nonclassical correlations with a probability that increases with the number of observers. However, there are physical situations in which it is relatively easy to align a single measurement direction, or such a direction is naturally defined by the system.

For example, if qubits are encoded in the polarization of single photons and transmitted over optical fibres, then the ordinary and extraordinary modes are stable but optical birefringence causes a phase shift between the two modes. If this phase shift is unknown, then the observers share a single `direction' on the Bloch sphere but have an essentially random alignment of the other two directions. While experimental techniques are available to account for this phase shift and may exist for other situations in which there is a preferred direction, we show that if the observers are trying to violate a Bell inequality, then such techniques are unnecessary (the related question for quantum key distribution in this situation has also been investigated~\cite{alaing2010}).

Specifically, we show that if the reference direction, $\vec{n}$, is the $z$-axis and the observers agree on a labeling convention for their measurements, they will always obtain correlation functions that violate either $\SN_1$ or $\SN_2$ if the measurements are sampled according to PROM (i.e., if the measurements are orthogonal and confined to the plane perpendicular to some normal vector $\vec{n}$ shared by the $N$ parties).

For PROM in the $xy$ plane, the observers share the $z$ axis. If the $k^{th}$ observer's two measurements are $\Omega^k_{0}$ and $\Omega^k_1$, then, because the labels $0$ and $1$ are arbitrary, they are free to relabel them as necessary so that $\{\Omega^k_0,\Omega^k_1,z\}$ forms a right-handed coordinate system for all $k$ (a similar result follows for left-handed coordinate systems). Under this labeling convention, randomly choosing $\Omega^k_{0}$ and $\Omega^k_1$ is equivalent to randomly choosing a single random angle $\chi_k$, with $\theta^k_{s_k}=\frac{\pi}{2}$ and
\begin{equation}\label{eq:labeling_convention}
\phi^k_{s_k} = \chi_k + s_k \frac{\pi}{2}	\,.
\end{equation}

\begin{thm}\label{thm:xyPROM}
Any choice of orthogonal measurements in the $xy$ plane on the $N$-partite GHZ state will generate correlation functions that satisfy either
\begin{align}
\SN_1 &= 2^{\frac{N+1}{2}}\Bigl|\sum_{\vec{s}\in\Z_2^{\otimes N}}
 \cos\left((1+N-2s)\frac{\pi}{4}\right)E(\vec{s})\Bigr| \geq 2^{\frac{3N}{2} - 1}	\,,\nonumber\\
\SN_2 &= 2^{\frac{N+1}{2}}\Bigl|\sum_{\vec{s}\in\Z_2^{\otimes N}}
 \cos\left((1-N+2s)\frac{\pi}{4}\right)E(\vec{s})\Bigr| \geq 2^{\frac{3N}{2} - 1}	\,,
 \label{Eq:S12N}
\end{align}
provided the observers obey the labeling convention described above.
\end{thm}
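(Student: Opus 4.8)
The plan is to exploit the fact that, under PROM in the $xy$ plane, the full correlation functions collapse to depend only on the total measurement label. Substituting $\theta^k_{s_k}=\frac{\pi}{2}$ and $\phi^k_{s_k}=\chi_k+s_k\frac{\pi}{2}$ from Eq.~\eqref{eq:labeling_convention} into Eq.~\eqref{Eq:GHZ:FullCorFn}, every $\sin\theta^k_{s_k}=1$ and every $\cos\theta^k_{s_k}=0$, so that $E(\vec{s})=\cos(\chi+s\frac{\pi}{2})$, where $\chi=\sum_{l=1}^N\chi_l$ and $s=\sum_{l=1}^N s_l$. Since $E(\vec{s})$ now depends on $\vec{s}$ only through $s$, the sums in Eq.~\eqref{Eq:S12N} can be reorganized by collecting the $\binom{N}{s}$ bit-strings sharing a given value of $s$, reducing each double sum over $\Z_2^{\otimes N}$ to a single sum over $s=0,1,\ldots,N$.

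The main work is then to evaluate these two single-index binomial sums, and I expect the key simplification here to be the crux of the argument. Writing each cosine in terms of $\omega=e^{i\pi/4}$ and noting $e^{is\pi/2}=\omega^{2s}$, I would expand the product $\cos\!\big((1+N-2s)\frac{\pi}{4}\big)\cos(\chi+s\frac{\pi}{2})$ into four exponential terms. Two of these are independent of $s$ and contribute a factor $\sum_s\binom{N}{s}=2^N$, while the remaining two carry a factor $\omega^{\pm 4s}=(-1)^s$ and therefore vanish upon summation, because $\sum_s\binom{N}{s}(-1)^s=(1-1)^N=0$ for $N\ge1$. The surviving terms yield $\SN_1=2^{(3N-1)/2}\,\big|\cos(\chi+(1+N)\frac{\pi}{4})\big|$, and the identical computation with $1+N-2s$ replaced by $1-N+2s$ gives $\SN_2=2^{(3N-1)/2}\,\big|\cos(\chi+(N-1)\frac{\pi}{4})\big|$.

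The final step is purely trigonometric. Setting $\alpha=\chi+(N-1)\frac{\pi}{4}$, the two phases differ by exactly $\pi/2$, so $\SN_1=2^{(3N-1)/2}|\sin\alpha|$ and $\SN_2=2^{(3N-1)/2}|\cos\alpha|$. Because $\sin^2\alpha+\cos^2\alpha=1$, at least one of $|\sin\alpha|$, $|\cos\alpha|$ is at least $\frac{1}{\sqrt{2}}$, whence at least one of $\SN_1,\SN_2$ is at least $2^{(3N-1)/2}\cdot\frac{1}{\sqrt{2}}=2^{3N/2-1}$, which is the claimed bound. The right-handed labeling convention fixing $\{\Omega^k_0,\Omega^k_1,z\}$ is precisely what secures the form $\phi^k_{s_k}=\chi_k+s_k\frac{\pi}{2}$ with a uniform sign, so no case analysis over handedness enters the estimate; the left-handed case is handled symmetrically.
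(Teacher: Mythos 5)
Your proposal is correct and follows essentially the same route as the paper: reduce $E(\vec{s})$ to $\cos(\chi+s\frac{\pi}{2})$ via the labeling convention, collapse the sum over $\Z_2^{\otimes N}$ into a binomial sum over $s$, obtain $\SN_1$ and $\SN_2$ as $2^{\frac{3N-1}{2}}$ times two sinusoids whose phases differ by $\frac{\pi}{2}$, and conclude by noting one of them must have magnitude at least $\frac{1}{\sqrt{2}}$. The only difference is that you make explicit the evaluation of the binomial sum (via complex exponentials, $\sum_s\binom{N}{s}=2^N$ and $\sum_s\binom{N}{s}(-1)^s=0$), a step the paper states without derivation; your closed forms agree with the paper's up to the identities $|\cos(x+\frac{\pi}{2})|=|\sin x|$.
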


\begin{proof}
For the $N$-partite GHZ state and the labeling convention in Eq.~\eqref{eq:labeling_convention}, the full correlation
function, Eq.~\eqref{Eq:GHZ:FullCorFn}, becomes
\begin{align}
E\left(\vec{s}\right)= \cos\left(\sum_{k=1}^N \phi^k_{s_k}\right)
= \cos\left(\chi +s\frac{\pi}{2}\right)	\label{eq:correlation}	\,,
\end{align}
where $\chi = \sum_{k=1}^N \chi_k$ and as before, $s=\sum_{k=1}^N s_k$.
Substituting this into the left-hand-side of inequality~\eqref{Eq:MABK:S1} gives
\begin{align}
 \SN_1= 2^{\frac{N+1}{2}}\Bigl|\sum_{\vec{s}\in\Z_2^{\otimes N}}
 \cos\left((1+N-2s)\frac{\pi}{4}\right)\cos\left(\chi
 +s\frac{\pi}{2}\right)\Bigr|\label{eq:inequality1a}	\,.
\end{align}
There are $\binom{N}{s}$ ways of choosing $\vec{s}$ such that $\sum_{k=1}^N
s_k=s$, so Eq.~\eqref{eq:inequality1a} can be rewritten as
\begin{align}
 \SN_1 &= 2^{\frac{N+1}{2}}\Bigl|\sum_{s=0}^{N}\binom{N}{s}\cos\left((1+N-2s)\frac{\pi}{4}\right)
 \cos\left(\chi +s\frac{\pi}{2}\right)\Bigr|\nonumber\\
 &=2^{\frac{3N - 1}{2}}\Bigl|\sin\left(\chi + (N-1)\frac{\pi}{4}\right)\Bigr|	
 \label{eq:inequality1b}	\,.
\end{align}
Similarly, substituting Eq.~\eqref{eq:correlation} into the left-hand-side of
inequality~\eqref{eq:MABK:S2} gives
\begin{align}
 \SN_2 &= 2^{\frac{3N - 1}{2}}\Bigl|\sin\left(\chi + (N+1)\frac{\pi}{4}\right)\Bigr|	
 \label{eq:inequality2a}	\,.
\end{align}
{Because
\begin{equation}
 \max\left\{\left|\sin\left(x-\frac{\pi}{4}\right)\right|,
 \left|\sin\left(x+\frac{\pi}{4}\right)\right|\right\}\ge\frac{1}{\sqrt{2}}	\quad \forall \ x	\,,
\end{equation}
either} $\SN_1\geq2^{\frac{3N}{2} - 1}$ or $\SN_2\geq 2^{\frac{3N}{2} - 1}$.
\end{proof}

From Sec.~\ref{sec:BI}, the inequalities in Theorem~\ref{thm:xyPROM} are Bell inequalities with an \textit{upper} bound of $2^N$ in any locally causal theory. Therefore for $N=2$ and $\chi \neq \frac{k\pi}{2}$ for $k\in\Z$ the observers will violate $\SN_1$ or $\SN_2$. When $N>2$, the observers will always violate $\SN_1$ or $\SN_2$ by a factor of at least $2^{\frac{N}{2} - 1}$. Moreover, the upper bound for both $\SN_1$ and $\SN_2$ in quantum mechanics is $2^{\frac{3N-1}{2}}$, so the violation of $\SN_1$ or $\SN_2$ is within a constant factor $\frac{1}{\sqrt{2}}$ of the maximum violation possible in quantum mechanics.

We can also find the probability {$p\Bigl(\max\{\SN_1,\SN_2\}\geq (1-\epsilon)2^{\frac{3N-1}{2}}\Bigr)$} of the observers choosing measurements by PROM in the $xy$ plane such that they will obtain a violation of a Bell inequality that is within a factor $(1-\epsilon)$ of the maximum violation possible in quantum mechanics. 

Randomly choosing measurements by PROM in the $xy$ plane is equivalent to randomly choosing $\chi$ in Eqs.~\eqref{eq:inequality1b} and \eqref{eq:inequality2a}. The probability of choosing $\chi\in[0,2\pi]$ such that \begin{equation}
 \max\left\{\SN_1,\SN_2\right\}\ge (1-\epsilon)2^{\frac{3N-1}{2}}
\end{equation}
is the same as the probability of choosing $x\in[0,\frac{\pi}{4}]$ such that $\cos x\ge 1-\epsilon$, which is simply $\frac{4}{\pi}\cos^{-1}(1-\epsilon)$.

Therefore the probability of the observers choosing measurements by PROM in the $xy$ plane such that they will obtain a violation of either $\SN_1$ or $\SN_2$ above some threshold $(1-\epsilon)2^{\frac{3N-1}{2}}$ is
\begin{align}\label{eq:prob_threshold}
p\Bigl(\max\{\SN_1,\SN_2\}\geq (1-\epsilon)2^{\frac{3N-1}{2}}\Bigr) &= \frac{4}{\pi}\cos^{-1} (1-\epsilon) \,.
\end{align}

\begin{table*}[t!]
\begin{ruledtabular}
\begin{tabular}{l|| c |c|c|c|c|c|c|c|c|}
& \multicolumn{4}{c|}{$A_0$} & & \multicolumn{4}{c|}{$A_1$} \\ \hline
$N$ & $\{\SN_1,\SN_2\}$ & MABK & WWZB & Complete set  & & $\{\SN_1,\SN_2\}$ & MABK & WWZB & Complete set  \\ \hline
2 & 0.5411 & 0.7002 & 0.7002 & 0.7002 & & 0.0033 & 0.0033 & 0.0033 & 0.0033	\\
3 & 0.4129 & 0.4129 & 0.4580 & 0.9553 & & 0.2189 & 0.2189 & 0.2406 & 0.4850 \\
4 & 0.4729 & 0.4729 & 0.9130 & 0.9996 & & 0.3219 & 0.3219 & 0.3721 & 0.8189 \\
5 & 0.4832 & 0.4832 & 0.5867 & 0.9998 & & 0.3741 & 0.3741 & 0.4059 & 0.8635 \\
6~\footnote{Due to the small sample size for N = 6, we expect the corresponding entries of $A_0$ and $A_1$ to only be, respectively, lower and upper bounds.}  & 0.5035 & 0.5035 & 0.9155 & 0.9997 & & 0.4129 & 0.4129 & 0.4544 & 0.9782\\
\end{tabular}
\end{ruledtabular}
\caption{\label{tbl:rotated} Fraction of the surface area of a sphere that corresponds to normal vectors for which the probability of violating a class of Bell inequalities is nonzero ($A_0$) or unity ($A_1$). The four classes of Bell inequalities considered are $\{\SN_1,\SN_2\}$, the $2^N$ MABK inequalities, the $2^{2^N}$ WWZB inequalities and the complete set of Bell inequalities relevant to this scenario. Note that the value of both $A_0$ and $A_1$ exactly coincides for $\{\SN_1,\SN_2\}$ and the MABK inequalities when $N\neq 2$.}
\end{table*}

\begin{figure}[b!]
\centering
\includegraphics[width=\linewidth]{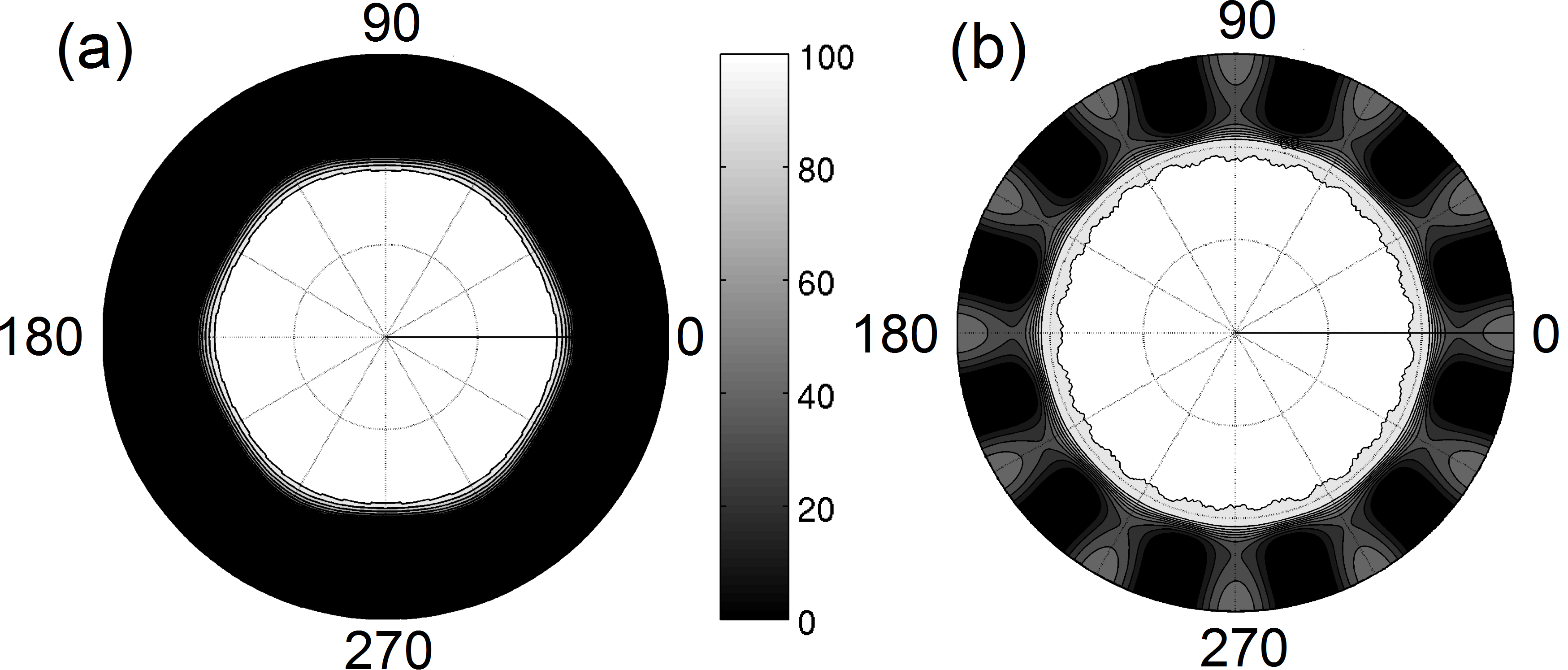}
\caption{(a) Contour plot of the probability of violating $\SN_1$ or $\SN_2$ for $N=6$ when sampling measurements via PROM with a reference direction as in Eq.~\eqref{Eq:Dfn:NormalVec}. (b) Contour plot of the probability of violating one of the WWZB inequalities for $N=6$ when sampling measurements via PROM with the reference direction as defined in Eq.~\eqref{Eq:Dfn:NormalVec}. In both contour plots, $\lambda = 0^{\circ}$ in the center and increases radially to a maximum of $90^{\circ}$ and $p_\I = 1$ for smaller $\lambda$ and generally decreases as $\lambda$ increases along a line of fixed $\alpha$ (i.e., along a radial line).}\label{Fig:PROM-rotated}
\end{figure}

\subsection{Partially aligned measurements - random measurements in other planes}\label{sec:prom_rotated}

Theorem~\ref{thm:xyPROM} applies when the observers measure along two orthogonal directions in the $xy$ plane, i.e., when the direction that the observers can align is the $z$-axis (which corresponds to the computational basis used to define the GHZ state). When the common direction is at some angle to the basis in which the GHZ basis is defined, the probability of observers obtaining correlation functions that violate a Bell inequality can change significantly. For $N\geq 2$, we simulate $p_\I(\vec{n})$ as a function of $\lambda$ and $\alpha$ where 
\begin{equation}\label{Eq:Dfn:NormalVec}
\vec{n} = \left(\cos\alpha \sin\lambda, \sin\alpha \sin\lambda, \cos\lambda\right) \,,
\end{equation}
and when $\I$ is $\SN_1$ and $\SN_2$, the $2^N$ MABK inequalities, the $2^{2^N}$ WWZB inequalities or the complete set of Bell inequalities for this scenario.

Given a normal vector, $\vec{n}$, shared by $N$ parties, we want the probability that $\vec{n}$ allows the $N$ parties to violate a given class of Bell inequalities with probability 1 or with some nonzero probability. Consequently, we define the ratio of the set of normal vectors that give violations of the class $\I$ of Bell inequalities with probability $p_\I(\vec{n}) = 1$ to the set of all normal vectors (i.e., the set of points on the surface of a unit sphere with $z\geq 0$) by
\begin{align}
A_1 &= \frac{1}{2\pi}\int_0^{\frac{\pi}{2}}d\lambda\int_0^{2\pi}d\alpha \sin\lambda g^1_{\I}(\alpha,\lambda)	\,,
\end{align}
where
\begin{align}
g^1_{\I}(\alpha,\lambda) = \begin{cases} 1 &\text{if } p_\I(\vec{n}) = 1\,,
\\
0 &\text{otherwise}\,.
\end{cases}	
\end{align}
Similarly, we define the ratio of the set of normal vectors that give violations of the class $\I$ of Bell inequalities with probability $p_\I(\vec{n}) > 0$ to the set of all normal vectors by
\begin{align}
A_0 &= \frac{1}{2\pi}\int_0^{\frac{\pi}{2}}d\lambda\int_0^{2\pi}d\alpha \sin\lambda g^0_{\I}(\alpha,\lambda)	\,,
\end{align}
where
\begin{align}
g^0_{\I}(\alpha,\lambda) = \begin{cases} 1 &\text{if } p_\I(\vec{n}) > 0\,,
\\
0 &\text{otherwise}\,.
\end{cases}	
\end{align}
$A_1$ gives the fraction of the set of unit normal directions $\vec{n}$ such that $z\geq 0$ and observers who share $\vec{n}$ can always obtain correlations that violate a Bell inequality when they sample measurements using PROM. Likewise, $A_0$ gives the fraction of the set of normal directions $\vec{n}$ such that $z\geq 0$ and observers who share $\vec{n}$ can always obtain correlations that violate a Bell inequality with nonzero probability when they sample measurements using PROM. The values of $A_0$ and $A_1$ for $N=2,\ldots,6$ are given in Table \ref{tbl:rotated} and the probability of violating $\{\SN_1,\SN_2\}$ and the WWZB inequalities for $N=6$ is plotted in Fig.~\ref{Fig:PROM-rotated}.

The case when $N=2$ is exceptional because almost any rotation of the reference direction [except when $\alpha=0$ in Eq.~\eqref{Eq:Dfn:NormalVec}] reduces the probability of violating a Bell inequality to below 1. This occurs because there are dense sets of measurements that produce arbitrarily small violations of $\SN_1$ and $\SN_2$ when $\vec{n} = (0,0,1)$ and these measurements do not produce a violation of $\SN_1$ or $\SN_2$ (or any other Bell inequality) when the reference direction is rotated an arbitrarily small amount from the $z$ axis. 

However, for $N>2$, there are no such sets of measurements and so, as our results show, the reference direction can be rotated from the $z$ axis by some ``small" angle $\lambda$ (i.e., $\lambda \lesssim 35^{\circ}$) in any direction without reducing $p_\I$.

Numerically, we found that for any rotated reference vector $\vec{n}$, considering the full class of MABK inequalities provides no advantage over just considering $\SN_1$ and $\SN_2$, except when $N=2$, in which case the MABK inequalities form the complete set of Bell inequalities. Considering the full set of WWZB inequalities does increase the value of $A_1$, but not very substantially. However, testing against the full set of WWZB inequalities can substantially increase the value of $A_0$, i.e., the area of normal vectors for which $p_\I(\vec{n})>0$, as can be seen in Table~\ref{tbl:rotated}. Testing against the full set of WWZB inequalities also reveals a strong dependence on the parity of $N$, which occurs due to the $\delta_N$ term in Eq.~\eqref{Eq:GHZ:FullCorFn}.	

For $N>2$, the dependence on the azimuthal angle, $\alpha$, is small when testing against $\SN_1$ and $\SN_2$. In particular, $p_{\{\SN_1,\SN_2\}}(\vec{n}) = 1$ for all $\alpha$ and $\lambda\lesssim 35^{\circ}$. That is, for any $N\geq 3$, the reference direction can be rotated from the $z$ axis by at least $35^{\circ}$ in any direction without affecting the probability of generating correlations that violate one of two Bell inequalities, namely, $\SN_1$ and $\SN_2$. As $N$ increases, this threshold value of the polar angle $\lambda$ appears to increase slowly.

\section{Noisy scenarios}\label{Sec:Noisy}

So far, we have made use of various idealisations. We now examine what happens when some of these assumptions are relaxed. In Sec.~\ref{Sec:Decoherence}, we determine how depolarizing and dephasing noise upon the GHZ state reduce the probability of violating a Bell inequality when observers do not align any measurement directions or only align a single measurement direction. In Sec.~\ref{Sec:Perturbed} we analyze how the probability of violating a Bell inequality is affected by random perturbations in each observer's alignment of the common direction, i.e., when the observers cannot align their measurements perfectly.

\subsection{Decoherence}\label{Sec:Decoherence}

In order to study the ability of observers to violate a Bell inequality in the presence of noise, we consider depolarizing and dephasing noise. For simplicity, we assume that each qubit is transmitted over equally noisy, uncorrelated channels, so the noise for all qubits is described by a single parameter $\nu$, where $\nu=0$ corresponds to zero noise and $\nu=1$ corresponds to maximal noise (i.e., complete depolarizing or dephasing). We begin by outlining the correlation tensor formalism, which is a convenient method of examining the effect of uncorrelated noise. We then give a brief introduction to depolarizing and dephasing noise before presenting our results from numerical simulations on the probability of violating a Bell inequality $p_\I$ in the presence of noise.

\subsubsection{Correlation tensor formalism}\label{sec:correlation_tensor}

An arbitrary $N$-qubit state $\rho$ can be expanded in any basis of Hermitian operators acting on the Hilbert space $ \mathcal{H}_{2^N}=\left(\mathbb{C}^2\right)^{\otimes N}$. In particular, the $N$-fold tensor products of local Pauli operators
\begin{equation}
\Sigma_{\vec{a}}	= \otimes_{k=1}^N \sigma_{a_k}
\end{equation}
is one such basis; here $\vec{a}\in\Z_4^{\otimes N}$ is a string of $N$ indices, $\sigma_{a_k}\in\{\unit_2,\sigma_x,\sigma_y,\sigma_z\}$, and $\unit_2$ is the identity operator acting on $\mathbb{C}^2$.

Together with the orthogonality relation,
\begin{equation}
\tr{\Sigma_{\vec{a}}\Sigma_{\vec{b}}} =2^N\delta_{\vec{a},\vec{b}}	\,,
\end{equation}
we can then represent $\rho$ by
\begin{equation}\label{Eq:rho:Expansion}
\rho = \frac{1}{2^N}\sum_{\vec{a}\in\Z^{\otimes N}_4} T_{\vec{a}}\Sigma_{\vec{a}}	\,,
\end{equation}
where $T_{\vec{a}} = \text{Tr}\left[\rho\Sigma_{\vec{a}}\right]$	
is the \textit{correlation tensor}~\cite{laskowski2010}. The description in terms of the correlation tensor is thus equivalent to the description in terms of the density operator. In what follows, we will follow Ref.~\cite{laskowski2010} and describe the effect of noise on a quantum state using the correlation tensor, which allows us to define the effects of uncorrelated noise on each qubit. For the GHZ state, we have
\begin{align}
T_{\vec{a}} &= \text{Tr}\left[\GHZrho\Sigma_{\vec{a}}\right] \nonumber\\
&= \frac{1}{2}\bra{\vec{\textbf{0}}_N}\Sigma_{\vec{a}}\ket{\vec{\textbf{0}}_N} + \frac{1}{2}\bra{\vec{\textbf{1}}_N}\Sigma_{\vec{a}}\ket{\vec{\textbf{1}}_N}	\nonumber\\
&\quad + \frac{1}{2}\bra{\vec{\textbf{0}}_N}\Sigma_{\vec{a}}\ket{\vec{\textbf{1}}_N} + \frac{1}{2}\bra{\vec{\textbf{1}}_N}\Sigma_{\vec{a}}\ket{\vec{\textbf{0}}_N}	\,.
\label{Eq:T:GHZ}
\end{align}
All of these terms are $0$ unless $\Sigma_{\vec{a}}$ is a tensor product of either (1) $2k$ Pauli $y$ and $(N-2k)$ Pauli $x$ matrices or (2) $2k$ Pauli $z$ and $N-2k$ identity matrices for some $k\in\Z$.

\subsubsection{Depolarizing noise}\label{sec:depolarizing}

Depolarizing noise maps local Pauli operators as~\cite{nielsen2000}
\begin{align}
\unit_2 &\rightarrow \unit_2\,, &  \sigma_x &\rightarrow (1-\nu)\sigma_x\,,	\nonumber\\
\sigma_y &\rightarrow (1-\nu)\sigma_y\,, &  \sigma_z &\rightarrow (1-\nu)\sigma_z	\,.
\end{align}
Full correlation functions correspond to all observers performing non-trivial projective measurements, while restricted correlation functions correspond to some subset of observers performing the measurement $\unit_2$ (i.e., ignoring the outcomes from some observers). Therefore the effects of depolarization on the correlation functions are 
\begin{gather}\label{Eq:EffectOnCorFn:LocalDepolarizing}
	E(\vec{s})\to (1-\nu)^N E(\vec{s})	\,,
\end{gather}
and
\begin{gather}\label{Eq:EffectOnRestCorFn:LocalDepolarizing}
	E(\{s_k\}_{k\in\kappa})\to (1-\nu)^{|\kappa|} E(\{s_k\}_{k\in\kappa})	\,,
\end{gather}
for arbitrary subsets of observers, $\kappa\subset\{1,\ldots,N\}$.

Note that the effect on the full correlation functions is identical to the effect of mixing the GHZ state with the maximally mixed state $\unit_{2^N}$ according to
\begin{equation}
\GHZ \rightarrow (1-\mu)\GHZrho + \frac{\mu}{2^N}\unit_{2^N}
\end{equation}
when $(1-\mu)\rightarrow(1-\nu)^N$. 

\subsubsection{Dephasing noise}\label{sec:dephasing}

We also consider dephasing noise, which is appropriate when there is some preferred basis in the system which is particularly stable. Dephasing noise in the computational basis suppresses off-diagonal terms for each qubit, i.e., it maps local Pauli operators as~\cite{nielsen2000}
\begin{align}
\unit_2 &\rightarrow \unit_2\,, &  \sigma_x &\rightarrow (1-\nu)\sigma_x\,,	\nonumber\\
\sigma_y &\rightarrow (1-\nu)\sigma_y\,, &  \sigma_z &\rightarrow \sigma_z	\,.
\end{align}
Clearly, from Eqs.~\eqref{Eq:rho:Expansion} and \eqref{Eq:T:GHZ} all diagonal terms of $\proj{\Psi_N}$ are unaffected and, because off-diagonal terms of the correlation tensor are zero unless the term corresponds to tensor products of only $\sigma_x$ and $\sigma_y$ matrices, all off-diagonal terms are uniformly reduced by a factor of $(1-\nu)^N$. Therefore dephasing takes the GHZ state to
\begin{equation}
\frac{1}{2}(\ket{\vec{\textbf{0}}_N}\!\bra{\vec{\textbf{0}}_N} + \ket{\vec{\textbf{1}}_N}\!\bra{\vec{\textbf{1}}_N})+ \frac{(1-\nu)^N}{2}(\ket{\vec{\textbf{0}}_N}\!\bra{\vec{\textbf{1}}_N} + \ket{\vec{\textbf{1}}_N}\!\bra{\vec{\textbf{0}}_N})	\,.\label{Eq:LocalDephasing}
\end{equation}

For a dephased GHZ state, the full correlation functions $E\left(\vec{s}\right)$ of Eq.\eqref{Eq:GHZ:FullCorFn} are
\begin{equation}\label{Eq:GHZ:FullCorFn:Dephased}
 (1-\nu)^N\cos\left(\sum_{l=1}^N \phi^l_{s_l}\right) \prod_{k=1}^N\sin\,\theta^k_{s_k}+\delta_N\prod_{k=1}^N\cos\,\theta^k_{s_k}\,.
\end{equation}
As the restricted correlation functions, Eq.~\eqref{Eq:GHZ:ResCorFn}, depend only on the component of the measurements in the $z$ direction, they are unchanged under dephasing noise (note that this is a property specific to the GHZ state).

Much like its separability property, a decohered or dephased $\GHZ$ also gradually loses its ability to violate \textit{any} Bell inequality as the level of noise (characterised by $\nu$) increases. Some bounds on the levels of dephasing and depolarising noise at which the $\GHZ$ state no longer violates a Bell inequality can be found in Ref.~\cite{laskowski2010}.

\subsubsection{Measurements in all directions - RIM and ROM}
\label{Sec:Noisy:ROM}

\begin{figure*}[t!]
\centering
\includegraphics[width=\linewidth]{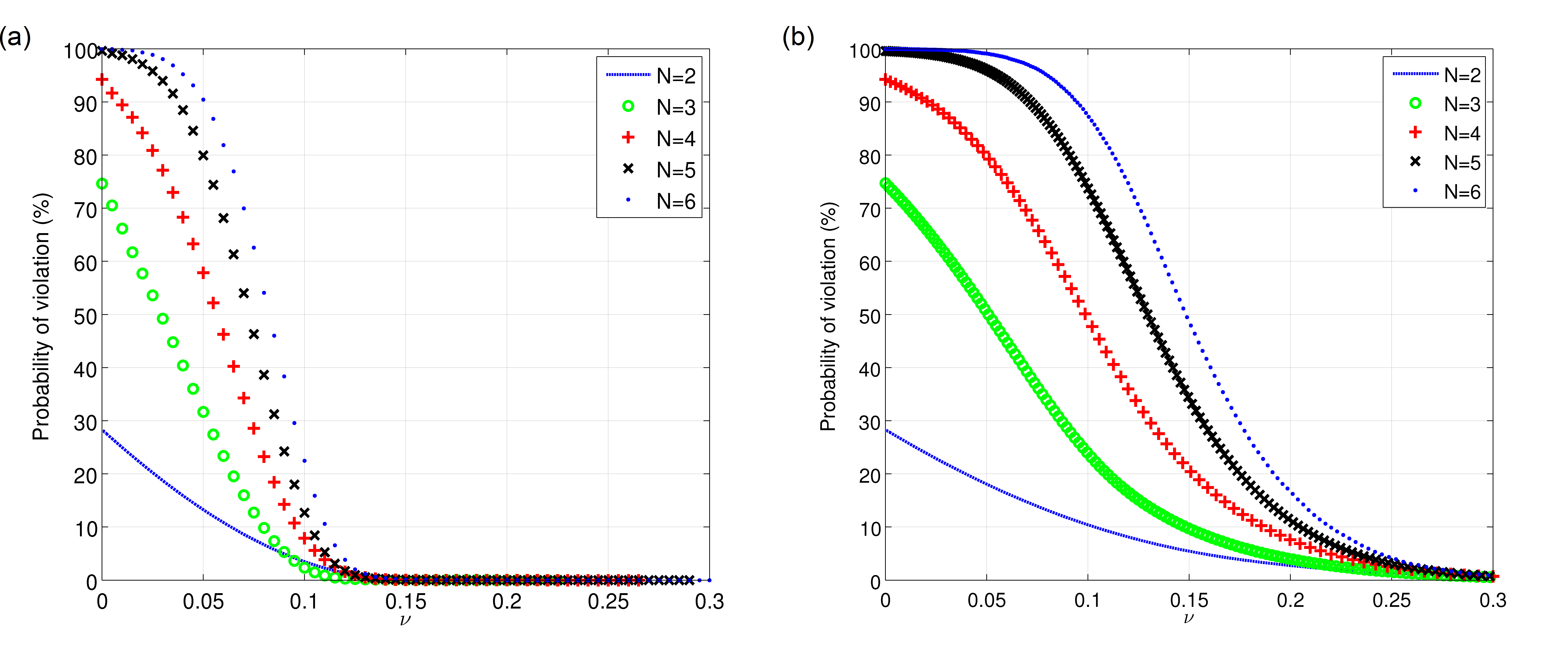}
\caption{\label{Fig:RIM-noisy} (Color online) Plot of the probability of violation $\pNL$, sampled using RIM for $N=2,\ldots,6$ observers, as a function of the noise parameter $\nu$ for (a): depolarizing noise and (b): dephasing noise.}
\end{figure*}

\begin{figure*}[t!]
\centering
\includegraphics[width=\linewidth]{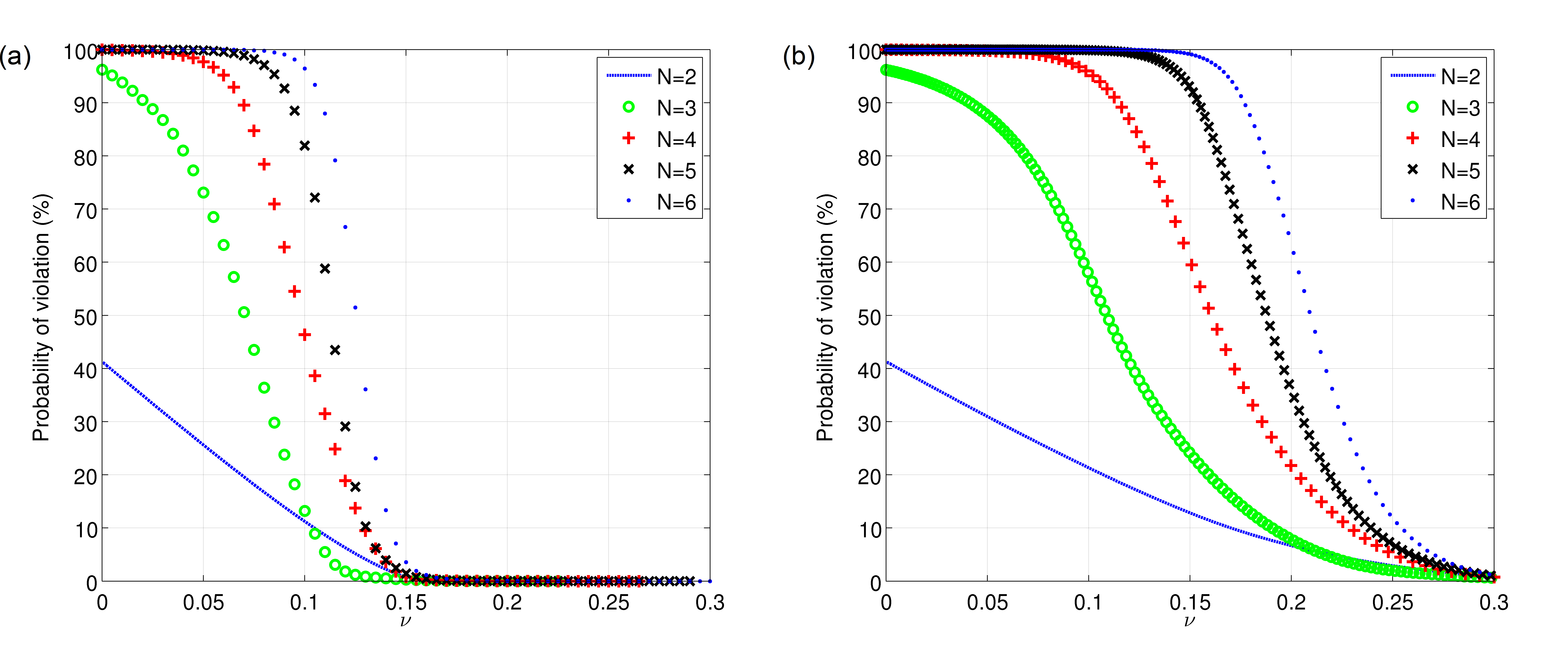}
\caption{\label{Fig:ROM-noisy} (Color online) Probability of violation $\pNL$, sampled using ROM for $N=2,\ldots,6$ observers, as a function of the noise parameter $\nu$ for (a): depolarizing noise and (b): dephasing noise.}
\end{figure*}

For $N=2,\ldots,6$ we have numerically calculated the probability of violating a Bell inequality, $\pNL$, under the influence of depolarizing and dephasing noise for RIM (Fig.~\ref{Fig:RIM-noisy}) and ROM (Fig.~\ref{Fig:ROM-noisy}). For both RIM and ROM, the probability of violating a Bell inequality for a given level of $\nu$ is always greater for dephasing noise than it is for depolarizing.

From these plots, we see that for all $N$, the probability of demonstrating Bell-inequality-violating correlations via either RIM or ROM on the GHZ state is robust against depolarizing and dephasing noise. Furthermore, this robustness increases with $N$. For example, with $10\%$ dephasing noise (i.e., $\nu = 0.1$), the probability of violation is reduced to $37\%$ (RIM) or $52\%$ (ROM) of the probability in the absence of noise in the case of the bipartite GHZ state whereas for the six-partite GHZ state, the probability of violation is reduced to $96\%$ (RIM) of the probability in the absence of noise or not affected to within the accuracy of the simulations (ROM). Furthermore, sampling measurements according to ROM not only increases the probability of violating a Bell inequality in the absence of noise compared to sampling measurements according to RIM, but also increases the stability with respect to both dephasing and depolarizing noise.

It is curious that for some range of noise parameters, the bipartite maximally entangled state actually gives a higher probability of violation as compared with the tripartite GHZ state.

\subsubsection{Partially aligned measurements - PROM in the $xy$ plane} \label{Sec:Noisy:PartialRFF}

When the observers choose orthogonal measurements in the $xy$ plane (i.e., $\theta= \frac{\pi}{2}$), they will obtain correlation functions that violate one of two MABK inequalities by an exponential amount. As we now show, this exponential violation of a MABK inequality translates directly into stability with respect to depolarizng and dephasing noise.

Let us now consider the effect of noise. For measurements on the $xy$ plane, Eq.~\eqref{Eq:GHZ:ResCorFn} implies that all restricted correlation functions vanish. It then follows from Eqs.~\eqref{Eq:EffectOnCorFn:LocalDepolarizing} and \eqref{Eq:GHZ:FullCorFn:Dephased} that the effects of depolarizing and dephasing noise are equivalent and both smoothly reduce $S^N_k$ to $(1-\nu)^N S^N_k$. Therefore the observers will always violate one of two MABK inequalities with PROM in the $xy$ plane if
\begin{equation}
(1-\nu)^N 2^{\frac{3N}{2}-1} > 2^N	\quad \Rightarrow \quad \nu < 1 - \frac{\sqrt[N]{2}}{\sqrt{2}}	\,.	
\label{eq:upperbound}
\end{equation}
Moreover, since the maximum MABK violation attainable by the GHZ state is $\sqrt{2}^{N-1}$ times the classical upper bound~\cite{WWproveWWZB}, correlations generated from the noisy GHZ state, Eq.~\eqref{Eq:EffectOnCorFn:LocalDepolarizing}, will never violate any MABK inequality when
\begin{equation}
\nu \ge  1-\frac{\sqrt[2N]{2}}{\sqrt{2}}	\,.
\label{eq:lowerbound}
\end{equation}
Therefore, as with the scenario where measurements are not restricted to a plane, the ability of observers to always violate either $\SN_1$ or $\SN_2$ is increasingly robust against decoherence as the number of observers increases. For $N\rightarrow \infty$, these limits are both $1-\frac{1}{\sqrt{2}}$ and observers will (depending on the level of noise $\nu$) either violate $\SN_1$ or $\SN_2$ for any choice of measurements or never violate $\SN_1$ or $\SN_2$.

The above analysis gives the critical value of $\nu$ at which the probability of violating $\SN_1$ or $\SN_2$ reaches the extremal values $1$ and $0$. For intermediate noise levels, namely, for $\nu\in[1-\frac{\sqrt[N]{2}}{\sqrt{2}},1-\frac{\sqrt[2N]{2}}{\sqrt{2}}]$, we can use Eq.~\eqref{eq:prob_threshold} to calculate the probability of violation. From Eqs.~\eqref{eq:prob_threshold} and \eqref{Eq:EffectOnCorFn:LocalDepolarizing}, we have
\begin{align}
 &p\left(\max\{\SN_1(\nu),\SN_2(\nu)\}>(1-\nu)^N(1-\epsilon)2^{\frac{3N-1}{2}}\right) \nonumber\\
 = &\frac{4}{\pi}\cos^{-1}(1-\epsilon)	\,.\label{eq:noiseprobability}
\end{align}
Therefore the observers will obtain statistics that violate $\SN_1$ or $\SN_2$ with probability $\frac{4}{\pi}\cos^{-1}(1-\epsilon)$ if
\begin{align}
 (1-\nu)^N(1-\epsilon)2^{\frac{3N-1}{2}} &= 2^N \nonumber \\
 \Rightarrow 1-\epsilon &= \frac{2^{\frac{1-N}{2}}}{(1-\nu)^N}	\,.	\label{eq:noiseviolation}
\end{align}
Substituting Eq.~\eqref{eq:noiseviolation} into Eq.~\eqref{eq:noiseprobability}, we find that with PROM on the $\nu$-locally-dephased-GHZ state, the observers will violate $\SN_1$ or $\SN_2$ with probability
\begin{equation}
p_{\{S_1^N, S_2^N\}}(\nu) =\frac{4}{\pi}\cos^{-1}\left(\frac{2^{\frac{1-N}{2}}}{(1-\nu)^N}\right)
\end{equation}
for $\nu\in[1-\frac{\sqrt[N]{2}}{\sqrt{2}},1-\frac{\sqrt[2N]{2}}{\sqrt{2}}]$.

\subsection{Imperfectly aligned measurement directions}\label{Sec:Perturbed}

\begin{figure}[t!]
\centering
\includegraphics[width=0.8\linewidth]{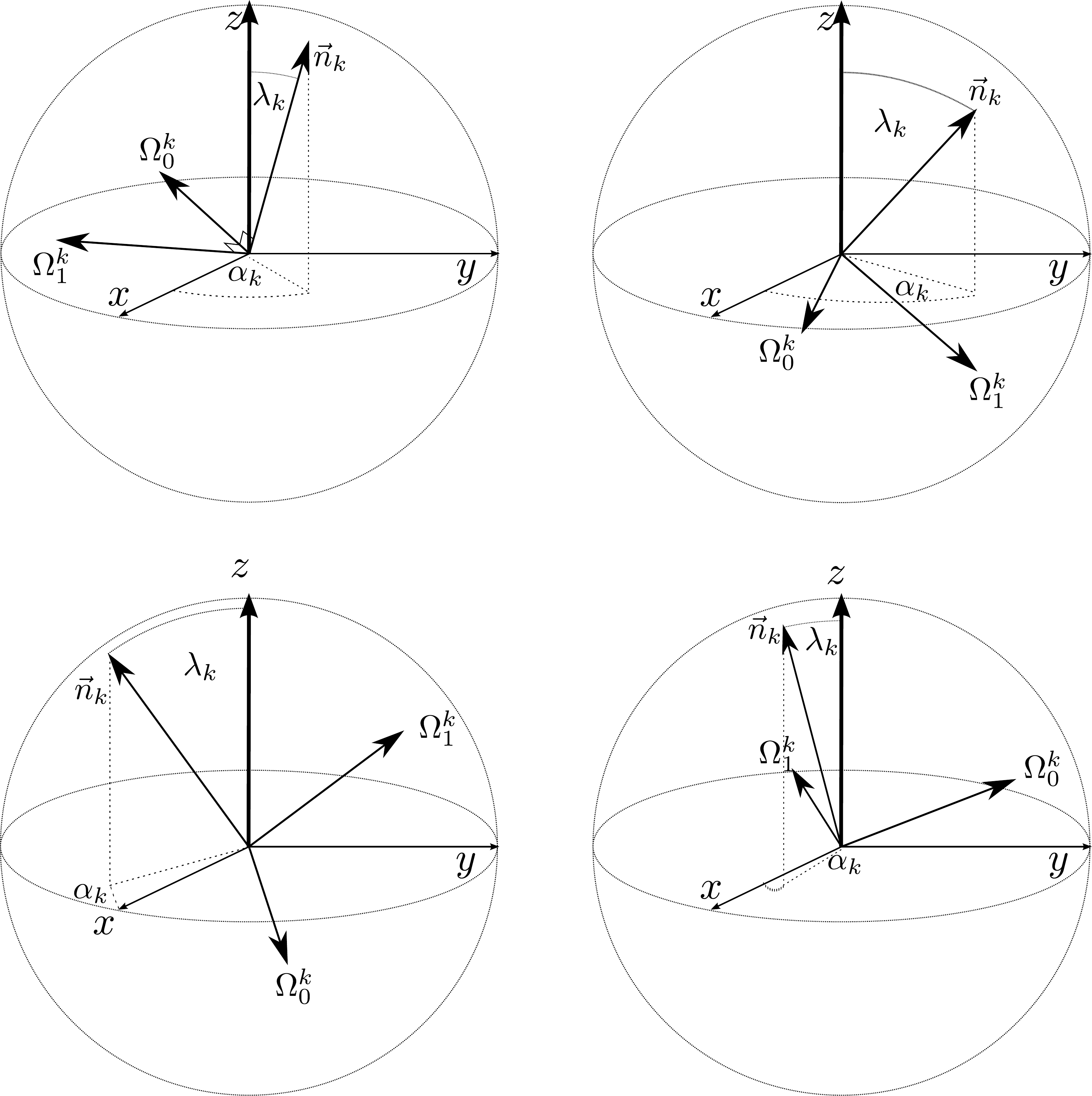}
\caption{\label{fig:bloch_normal} Representation of the normal ($\vec{n}_k$) and measurement directions ($\Omega^k_{s_0}$ and $\Omega^k_{s_0}$) for several parties.}
\end{figure}

So far, when discussing PROM, we have assumed that the observers have a perfectly aligned direction. We now consider the case where each observer may have some local approximation
\begin{equation}
\vec{n}_k = \left(\cos\alpha_k \sin\lambda_k, \sin\alpha_k \sin\lambda_k, \cos\lambda_k\right) \label{Eq:Dfn:NormalVecInd}
\end{equation}
to the $z$ axis (i.e., the basis in which the GHZ state is defined), see Fig.~\ref{fig:bloch_normal}. We assume that the azimuthal angles $\alpha_k\in[0,2\pi]$ are distributed uniformly and the $\lambda_k\in[0,\frac{\pi}{2}]$ are distributed such that
\begin{equation}
 p(\lambda^k_{s_k}) = (1+\frac{2}{\lambda_{\rm std}^2}) [\cos{\frac{\lambda_k}{2}}]^{\frac{4}{\lambda_{\rm std}^2}}	\,.
\end{equation}
Note that $\vec{n}_k$ with $\lambda_k >\frac{\pi}{2}$ are equivalent to $\vec{n}_k$ with $\lambda_k \leq\frac{\pi}{2}$, but will change the handedness of the labeling convention discussed in Sec.~\ref{sec:prom_xy} for the $k^{th}$ observer, thus changing which Bell inequality will be violated. For sufficiently small $\lambda$, the distribution is analogous to a Gaussian distribution with mean 0 and standard deviation $\lambda_{\rm std}$ on the surface of a sphere~\cite{bartlett2007}.

\begin{figure*}[t!]
\centering
\includegraphics[width=\linewidth]{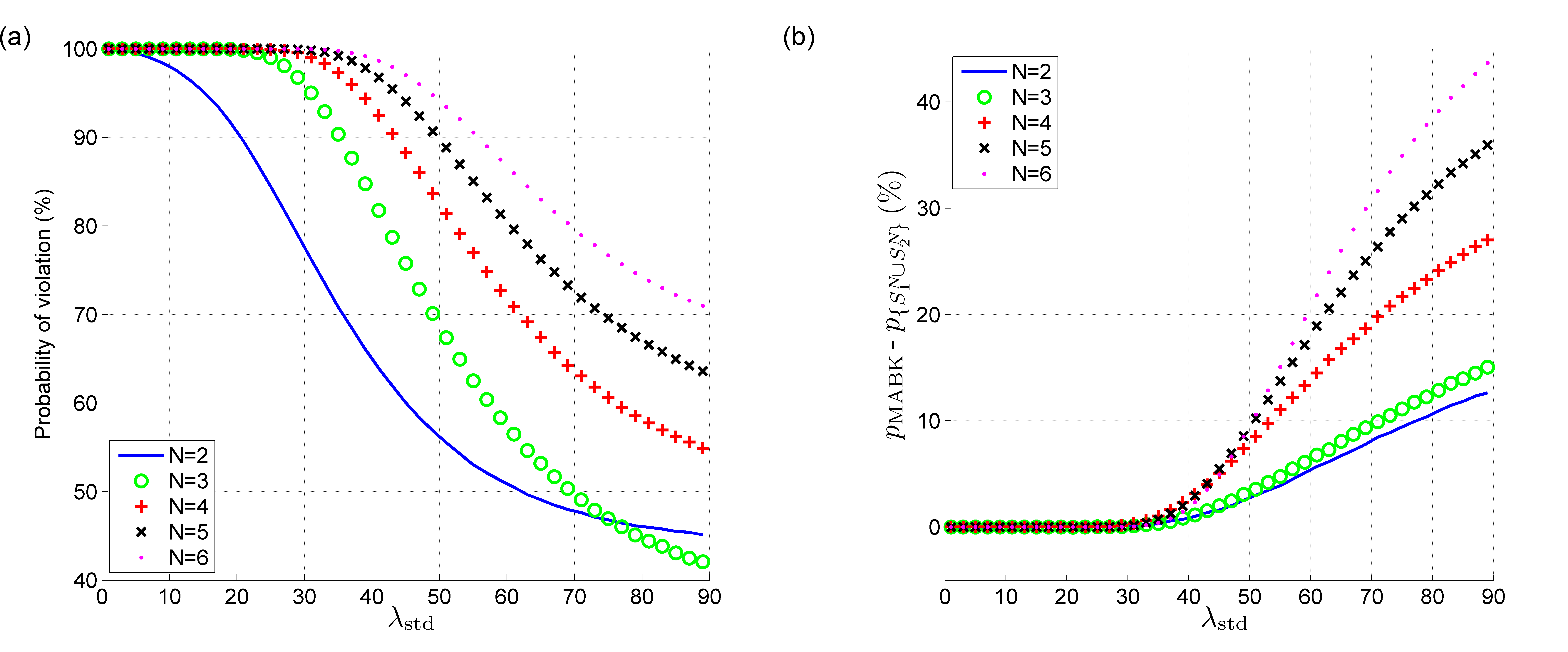}
\caption{\label{Fig:ROM-perturbed-MABK} (Color online) (a) Probability of violation $\pM$ when measurements are sampled using orthogonal measurements in the plane perpendicular to the normals $\vec{n}_k$ that are distributed according to a pseudo-Gaussian distribution as a function of the standard deviation of the distribution of polar angles $\lambda_\text{std}$ for $N=2,\ldots,6$ observers. (b) Increase in the probability of observers finding that their correlation functions do not correspond to a locally causal model when they check against the the full set of $2^N$ MABK inequalities, rather than just $\SN_1$ and $\SN_2$}.
\end{figure*}

The $k^{th}$ observer then measures in the plane perpendicular to $\vec{n}_k$, i.e., the $k^{th}$ observer's two measurements are now
\begin{align}
(\Omega^k_{s_k})_x &= \sin\phi^k_{s_k}\cos\lambda_k\cos\alpha_k + \cos\phi^k_{s_k}\sin\alpha_k \nonumber \\
(\Omega^k_{s_k})_y &= \sin\phi^k_{s_k}\cos\lambda_k\sin\alpha_k - \cos\phi^k_{s_k}\cos\alpha_k \nonumber \\
(\Omega^k_{s_k})_z &= -\sin\phi^k_{s_k}\sin\lambda_k \,.
\end{align}
Each observer still chooses random orthogonal measurements and the same labeling as before, i.e.,
\begin{equation}
\phi^k_{s_k} = \chi_k + s_k \frac{\pi}{2}	\,.
\end{equation}

In Fig.~\ref{Fig:ROM-perturbed-MABK} we present numerical results for the probability of violating one of the $2^N$ MABK inequalities as a function of the standard deviation of the distribution of polar angles $\lambda_\text{std}$ for $N=2,\ldots,6$ observers and the difference that checking the full set of $2^N$ MABK inequalities rather than just $\SN_1$ and $\SN_2$. Even for standard deviations $\lambda_\text{std}\approx 90^{\circ}$, the probability of violating an MABK inequality is higher than when the observers do not share a direction, {\it cf.} Table~\ref{tbl:ProbViolation}. The effect of increasing $\lambda_\text{std}$, that is, of decreasing the average accuracy of the local approximation to the common direction, is to smoothly decrease the probability of violating a MABK inequality. We find that checking against more inequalities makes little difference for $\lambda_\text{std}\lesssim 30^{\circ}$ (i.e., the probability increases by less than $0.1\%$). As the individual approximations become less accurate, the difference increases approximately linearly and also increases with $N$. At first glance, this seems rather counter-intuitive, as for $\lambda_\text{std}\approx 90^{\circ}$, the observers have essentially no idea what the reference direction is. The difference arises because the pseudo-Gaussian distribution is biased to smaller values of $\lambda$, which is only defined modulo $90^{\circ}$. 

\section{Discussion and Conclusion}
\label{sec:conclusion}

In this paper, we have shown that the degree to which $N$ observers can align their measurements substantially affects the probability of them generating Bell-inequality-violating (BIV) correlations by performing randomly chosen measurements on the $N$-partite GHZ state. Furthermore, the difficulty involved in verifying that some correlations are BIV is also closely related to the extent to which the observers can align their measurements: the better alignment they have, the fewer inequalities are needed to verify the nonlocal nature of these correlations. However, this reduction in difficulty does depend on how the observers can align their measurements. If, for example, they align the measurement directions corresponding to the basis in which the shared GHZ state is defined, then the observers will always generate BIV-correlations and this can be verified by testing the correlation functions against just two fixed Bell inequalities. As the aligned direction is rotated away from the $z$ axis (which corresponds to the basis in which the GHZ state is defined), the probability of violating a Bell inequality smoothly decreases. These results may make it easier to test Bell inequality violations over large distances, as they reduce the need to align distant measurements.

We have also shown that these results and the results presented in Ref.~\cite{ycliang2010} are strongly robust against uncorrelated noise. Moreover, we have shown that even if the observers can only partially align a measurement direction, i.e., if each observer has an approximation to the $z$ axis that is distributed with a standard deviation of up to $35^{\circ}$, they can still obtain BIV-correlations with probability 1. This suggests that the idea of demonstrating BIV correlations using randomly chosen measurement bases is not only an idealization, but is also applicable to real-world scenarios. Our results may also have implications for reference-frame-independent quantum key distribution~\cite{alaing2010} in the presence of noise. Furthermore, given the close connection between Bell inequality violation and the security of quantum key distribution protocols, it will be interesting to investigate if the results presented here have any implications on real-life implementation of device-independent quantum key distributions~\cite{diqkd}. 

Our results also provide insight into the structure of the set of locally causal correlations and its relations with respect to the set of quantum correlations. To this end, consider the results presented in Table~\ref{tbl:ProbViolation}. As discussed in Sec.~\ref{Sec:Noiseless:ROM}, the probability of violating any one of the MABK inequalities is equal. If a given set of correlation functions could violate at most one MABK inequality, then we would have
\begin{equation}\label{Eq:AtMost1Ineq}
R := \log_2\frac{\pM}{p_{S_1^N}} = N	\,,
\end{equation}
as $\SN_1$ is one of the $2^N$ MABK inequalities. For $N=2$, this holds, and one can indeed show that at most one CHSH inequality can be violated (see the supplementary material of Ref.~\cite{ycliang2010}). However, for larger values of $N$, we see that Eq.~\eqref{Eq:AtMost1Ineq} no longer holds: for example, with RIM, we have $R = 2.913$ and $R = 11.258$, respectively, for $N=3$ and $N=15$; similar scaling is also observed for ROM. In relation to this, it will also be interesting to determine if there are other aspects about the sets of correlations that we can learn by surveying randomly generated correlations, a problem that we shall leave for future research.

\begin{acknowledgments}
YCL acknowledges helpful discussion with Nicolas Gisin, Jean-Daniel Bancal, Tomasz Paterek, and Stephanie Wehner. SDB acknowledges helpful discussions with Terry Rudolph and Nick Harrigan. This work is supported by the Australian Research Council, the Swiss NCCR ``Quantum Photonics'' and the European ERC-AG QORE.
\end{acknowledgments}

\appendix
\section{Reformulation of the MABK inequalities}
\label{app:newbeta}

In this appendix, we show that the coefficients
\begin{equation}\label{eq:equivalentbetas}
 \beta\left(\vec{s}\right)=\!\!\!\sum_{\vec{a} \in\{-1,1\}^{\otimes N}}
 \sqrt{2}\cos\left[\frac{\pi}{4}(N+1-\sum_{j=1}^{N}
 a_j)\right]\prod_{l=1}^{N} a_l^{s_l}	\,,
\end{equation}
in the MABK inequalities depend only on $s=\sum_{j=1}^N s_k$ and $N$. In particular, we prove that
\begin{equation}\label{Eq:beta:simplified}
\beta\left(\vec{s}\right) =: \beta\left(s,N\right) = 2^{\frac{N+1}{2}}\cos\left(\frac{\pi}{4}(1+N-2s)\right)	\,,
\end{equation}
for all $\vec{s} \in \Z^{\otimes N}_2$ and $N\geq2$. The proof is by induction in $N$.

\begin{proof}
For the purpose of the proof, we will use $\vec{s}^{N-1}$ and $\vec{s}^N$ to denote the $(N-1)$-bit string $(s_1,\ldots,s_{N-1})$ and the $N$-bit string $(s_1,\ldots,s_N)$; likewise for $\vec{a}^{N-1}$ and $\vec{a}^N$. Moreover, let us define $s':=\sum_{k=1}^{N-1} s_k$, $a:=\sum_{k=1}^{N} a_k$, $a':=\sum_{k=1}^{N-1} a_k$ and
\begin{equation}
\gamma\left(\vec{s}^N\right)=\sum_{\vec{a} \in \{-1,1\}^{\otimes N} }\sqrt{2}\sin\left(\frac{\pi}{4}(N+1-a)\right)\prod_{l=1}^{N} a_l^{s_l}	\,. \label{eq:appoldgamma}
\end{equation}

For $N=2, 3$, it can be easily checked by explicit calculation that Eq.~\eqref{Eq:beta:simplified} holds. To prove Eq.~\eqref{Eq:beta:simplified} for general $N>2$, let us first establish some recursion relations. To this end, let us expand Eq.~\eqref{Eq:beta:simplified} in terms of $a_N=\pm1$ to get
\begin{align}
 \beta\left(\vec{s}^N\right) =
 &\sum_{\vec{a}^{N-1} \in \{-1,1\}^{\otimes N-1} }\sqrt{2}\cos\left(\frac{\pi}{4}(N - a')\right)\prod_{l=1}^{N-1} a_l^{s_l} \nonumber \\
 + (-1)^{s_N} &\sum_{\vec{a}^{N-1} \in \{-1,1\}^{\otimes N-1} }\sqrt{2}\cos\left(\frac{\pi}{4}(N+2-a')\right)\prod_{l=1}^{N-1} a_l^{s_l}	\,.
\end{align}
As $\cos(x+\frac{\pi}{2})=-\sin(x)$, we have
\begin{align}
\beta\left(\vec{s}^N\right) &= \beta\left(\vec{s}^{N-1}\right) - (-1)^{s_N} \gamma\left(\vec{s}^{N-1}\right)	
\,.\label{eq:betarecursion}
\end{align}
Using a similar argument, we have
\begin{align}
 \gamma\left(\vec{s}^{N-1},1-s_N\right) &= \gamma\left(\vec{s}^{N-1}\right) + (-1)^{1-s_N}
 \beta\left(\vec{s}^{N-1}\right)	\nonumber \\
 &=-(-1)^{s_N}\beta\left(\vec{s}^N\right)\,, \label{eq:gammarecursion1}
\end{align}
or equivalently,
\begin{equation}
 \gamma\left(\vec{s}^{N}\right) =-(-1)^{1-s_N}\beta\left(\vec{s}^{N-1},1-s_N\right)\,. \label{eq:gammarecursion2}
\end{equation}
By the induction hypothesis, let us suppose that Eq.~\eqref{eq:equivalentbetas} is true for some $N=n_0-1$, where $n_0\ge3$, we will now prove that it also holds for $N=n_0$. Explicitly, note from Eq.~\eqref{eq:betarecursion} that
\begin{align*}
\beta\left(\vec{s}\,^\nn\right) &= \beta\left(\vec{s}\,^{\nn-1}\right) - (-1)^{s_\nn} \gamma\left(\vec{s}\,^{\nn-1}\right)\\
&= \beta\left(\vec{s}\,^{\nn-1}\right) - (-1)^{s_\nn+s_{\nn-1}} \beta\left(\vec{s}\,^{\nn-2},1-s_{\nn-1}\right)\\
&= 2^{\frac{\nn}{2}}\cos\left[\frac{\pi}{4}(\nn-2s')\right]- (-1)^{s_\nn+s_{\nn-1}}2^{\frac{\nn}{2}}\times\\
&~\quad\cos\left[\frac{\pi}{4}(\nn-2(s'+1-2s_{\nn-1}))\right]\\
&= 2^{\frac{\nn}{2}}\bigg\{\cos\left[\frac{\pi}{4}(\nn-2s')\right]- (-1)^{s_\nn}\times\\
&\qquad\qquad\cos\left[\frac{\pi}{4}(\nn-2s'-2)\right]\bigg\}\\
&= 2^{\frac{\nn}{2}}\bigg\{\cos\left[\frac{\pi}{4}(\nn-2s')\right]+\\
&\qquad\qquad\cos\left[\frac{\pi}{4}(\nn-2s'+2-4s_\nn)\right]\bigg\}\\
&= 2^{\frac{\nn}{2}+1}\cos\left[\frac{\pi}{4}(\nn+1-2s)\right]\cos\left[\frac{\pi}{4}(2s_\nn -1)\right]\\
&= 2^{\frac{\nn+1}{2}}\cos\left[\frac{\pi}{4}(\nn+1-2s)\right]
\end{align*}
where the second equality follows from Eq.~\eqref{eq:gammarecursion2}, the third equality follows from the induction hypothesis, and the other equalities follow from simple algebraic calculation using trigonometric identities.
\end{proof}

\section{Equivalent MABK Inequalities}\label{App:EquivalentMABK}

Bell inequalities in the same equivalence class are those that can be obtained from one another by some permutation of the
labels on the parties $k$, and/or settings $s_k=0\leftrightarrow s_k=1$ and/or outcomes ``+1" $\leftrightarrow$ ``-1" in the coefficients defining the inequality~\cite{masanes2003,collins2004} (see also Ref.~\cite{WWproveWWZB}). Testing a given set of correlation functions against a different, but equivalent Bell inequality amounts to testing the \textit{same} Bell inequality, but with a different labeling and/or sign on the correlation functions.

Given that there are $N!$ permutations on the label $k$, $2^N$ distinct permutations on the labels $s_k$, and two different labeling of outcomes for each of the $2N$ measurement directions, the number of inequalities that are equivalent to Eq.~\eqref{eq:betanew} is upper bounded by $N!2^{3N}$. However, as we will show, most of these relabelings give identical inequalities. To this end, let us start by proving the following lemma.

\begin{lem}\label{Lem:PermOutcome}
For the MABK inequality in Eq.~\eqref{Eq:MABK:S1}, any relabeling of the measurement outcomes on a subset of parties $k\in\N$ can be simulated by a permutation of the label $s_k$ for the remaining parties.
\end{lem}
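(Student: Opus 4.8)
The plan is to understand precisely what relabeling the measurement outcomes does to the correlation functions and then absorb that effect into the existing coefficient structure $\beta(s,N)$ by reindexing the sum. When the $k^{th}$ observer flips the sign convention of their outcomes for \emph{both} of their measurements (i.e.\ ``$+1$''$\leftrightarrow$``$-1$''), every full correlation function $E(\vec{s})$ that involves party $k$ nontrivially picks up a factor of $-1$. Since the GHZ full correlation functions involve all $N$ parties, flipping outcomes on a subset $\N\subseteq\{1,\ldots,N\}$ multiplies $E(\vec{s})$ by $(-1)^{|\N|}$ uniformly, independent of $\vec{s}$. First I would make this observation precise: the inequality $\mathcal{S}^N_1$ is defined by the coefficients $\beta(s,N)$ weighting $E(\vec{s})$, and since the inequality is stated with an absolute value, a global sign $(-1)^{|\N|}$ on all terms is invisible. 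This handles the case where outcomes for \emph{both} settings of a party are flipped together.

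The subtler case, and the one that does the real work, is when a party flips the outcome labeling of only \emph{one} of its two measurements, say the measurement $s_k=0$ but not $s_k=1$ (or vice versa). Here the effect is $E(\vec{s})\to (-1)^{1-s_k}E(\vec{s})$ or $(-1)^{s_k}E(\vec{s})$: the sign depends on which setting party $k$ chose. The key step is to show this is exactly reproduced by the substitution $s_k\to 1-s_k$ in the coefficient $\beta$. Because $\beta(s,N)$ depends only on $s=\sum_k s_k$ through $\cos\!\left(\frac{\pi}{4}(1+N-2s)\right)$, sending $s_k\to 1-s_k$ shifts $s\to s\pm 1$, and I would compute that $\cos\!\left(\frac{\pi}{4}(1+N-2(s\pm1))\right)=\cos\!\left(\frac{\pi}{4}(1+N-2s)\pm\frac{\pi}{2}\right)=\mp\sin\!\left(\frac{\pi}{4}(1+N-2s)\right)$. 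The plan is to match this trigonometric shift against the setting-dependent sign flip coming from the single-measurement outcome relabeling, thereby exhibiting an explicit correspondence between ``flip the outcome on one measurement of party $k$'' and ``apply $s_k\to 1-s_k$ together with an appropriate relabeling.'' Iterating over all parties in $\N$ then simulates an arbitrary subset of outcome relabelings purely by setting permutations.

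Concretely, I would organize the argument as follows: write the transformed functional $\bigl|\sum_{\vec{s}}\beta(s,N)\,(\text{sign}(\vec{s}))\,E(\vec{s})\bigr|$, where $\text{sign}(\vec{s})$ records the cumulative effect of the outcome relabelings on subset $\N$; then reindex the sum by substituting $s_k\to 1-s_k$ for exactly the parties in $\N$ and verify that the product of the $\beta$-shift factors and the setting-dependent signs collapses to an overall constant $\pm1$, which again drops out under the absolute value. I would treat the even and odd cases of $|\N|$ (and the per-party choice of which measurement is flipped) uniformly by tracking the accumulated $\pm\sin/\cos$ factors. The main obstacle I anticipate is bookkeeping the signs carefully: one must keep straight the distinction between flipping outcomes on one versus both measurements of a given party, and ensure that the $\frac{\pi}{2}$-shifts in the cosine argument are paired with the correct setting-dependent signs so that everything recombines into $\beta(s',N)$ for the relabeled string $\vec{s}\,'$. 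Once that sign accounting is done, the conclusion that every outcome relabeling is equivalent to a setting relabeling follows immediately, which is exactly what the lemma asserts and what justifies collapsing the $N!2^{3N}$ naive count down to the $2^N$ distinct MABK inequalities.
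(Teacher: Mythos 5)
You have the right opening moves---flipping both outcomes of a party is a global sign killed by the absolute value, and flipping the outcomes of a single setting multiplies $E(\vec{s})$ by $(-1)^{s_k}$ or $(-1)^{1-s_k}$---but your central step, simulating this by the substitution $s_k\to 1-s_k$ \emph{on party $k$ itself}, is wrong, and your own trigonometry exposes the problem. Substituting $s_k\to1-s_k$ replaces $s$ by $s+1-2s_k$ in the coefficient, so
\begin{align*}
\beta(s+1-2s_k,N) &= 2^{\frac{N+1}{2}}\cos\left[\tfrac{\pi}{4}(1+N-2s)-\tfrac{\pi}{2}+\pi s_k\right] \\
&= (-1)^{s_k}\,2^{\frac{N+1}{2}}\sin\left[\tfrac{\pi}{4}(1+N-2s)\right].
\end{align*}
The factor $(-1)^{s_k}$ you need does appear, but at the cost of a $\tfrac{\pi}{2}$ shift that converts the cosine into a sine. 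The sine-type coefficient is precisely that of the \emph{other} representative $\SN_2$ [cf.\ Eq.~\eqref{eq:MABK:S2}: $\beta(N-s,N)=2^{\frac{N+1}{2}}\sin\left[\frac{\pi}{4}(1+N-2s)\right]$], and since $\sin$ and $\cos$ of $\frac{\pi}{4}(1+N-2s)$ are not proportional as functions of $s$, no constant overall sign can repair the mismatch. So the claimed correspondence ``flip one outcome of party $k$ $\leftrightarrow$ apply $s_k\to1-s_k$ to party $k$'' is false. Concretely, for $N=2$, flipping party 1's setting-1 outcomes produces a functional equal (up to sign) to the CHSH inequality with \emph{party 2's} settings relabeled, and one checks directly that no relabeling of party 1's own settings reproduces it.

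The fix is exactly what the lemma's phrase ``the remaining parties'' announces and what the paper computes: set $s'_l=1-s_l$ for all $l\neq k$ (keeping $s'_k=s_k$), so that $s'=N-1-s+2s_k$ and $\beta(s',N)=-(-1)^{s_k}\beta(s,N)$. Here the complement flip produces the needed factor $(-1)^{s_k}$ while the cosine argument is only reflected through $\pi$ (costing a harmless global sign) rather than shifted by $\pi/2$ (which would turn it into a sine). Your paragraph-three reindexing could be salvaged, but not as written: reindexing the sum by flipping the parties in $\N$ does collapse the $\vec{s}$-dependent signs to a constant, yet for odd $|\N|$---in particular the single-party case, which generates all the others---what remains is the $\SN_2$-type (sine) functional evaluated on permuted correlations, \emph{not} ``$\beta(s',N)$ for the relabeled string.'' To finish you would still have to identify $\SN_2$ as $\SN_1$ with \emph{all} settings flipped and compose the two permutations, which lands you back on the complement flip of the paper's proof. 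As proposed, the recombination you assert fails in exactly the base case the lemma is built on.
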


\begin{proof}
Firstly, note that the effect of relabeling the measurement outcomes for the $j$-th measurement setting of the $k^{th}$ party modifies the correlation functions $E(\vec{s})$ by a phase factor, i.e.,
\begin{equation}
 E(\vec{s})\to (-1)^{1+j-s_k}E(\vec{s})\,.
\end{equation}
From this, we can see that relabeling the outcomes of both $s_k=0$ and 1 will only introduce a global sign change, which has no effect because of the absolute value function. Henceforth, we therefore only consider the case where the outcomes of the measurement corresponding to $s_k = 1$ are relabelled.

For simplicity, let us also consider the scenario where only the label for the measurement outcome of the $k^{th}$ party is changed (the proof for the more general scenario proceeds analogously). We can absorb the effect of this relabeling into the $\beta$'s by setting
\begin{equation}
\beta'(s,N) = (-1)^{s_k}\beta(s,N) \,.\label{eq:outcomerelabelled}
\end{equation}
We now show that this sign change can be simulated by a change of the label $s_k$ for the remaining $N-1$ observers, i.e., by setting
\begin{align}
s'_l = \delta_{kl}\,s_l +(1-\delta_{kl})(1 - s_l)
\end{align}
With this new labeling, we have
\begin{equation}
 s' = \sum_{l=1}^N s'_l = 2s_k - 1 + \sum_{l=1}^N (1-s_l)
 = N - 1 - s + 2s_k	\,.
\end{equation}
Substituting this into Eq.~\eqref{eq:betanew} gives
\begin{align}
\beta\left(s',N\right) &= 2^{\frac{N+1}{2}}\cos\left(\frac{\pi}{4}(-N + 3 + 2s - 4s_k))\right) \nonumber \\
&= -(-1)^{s_k} 2^{\frac{N+1}{2}}\cos\left(\frac{\pi}{4}(-N -1 + 2s)\right) \nonumber \\
&= -(-1)^{s_k} 2^{\frac{N+1}{2}}\cos\left(\frac{\pi}{4}(N + 1 - 2s)\right) \nonumber \\
&= -(-1)^{s_k}\beta(s,N) \,,
\end{align}
which is the same as Eq.~\eqref{eq:outcomerelabelled}, except for an overall sign which has no effect because of the absolute value function.
\end{proof}

It is then a small step away to prove the following Theorem.

\begin{thm}\label{Thm:MABK:Equivalence}
There are $2^{N}$ Bell inequalities that are equivalent to Eq.~\eqref{Eq:MABK:S1} under relabelings of the measurement outcomes, measurement labels or permutations of observers.
\end{thm}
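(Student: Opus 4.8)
The plan is to count the distinct inequalities obtained by all admissible relabelings, using Lemma~\ref{Lem:PermOutcome} to collapse the seemingly large group of symmetries down to just the $2^N$ permutations of the measurement labels. The naive upper bound from the preamble is $N!\,2^{3N}$, coming from $N!$ party permutations, $2^N$ settings permutations, and $2^{2N}$ outcome relabelings (two choices for each of $2N$ measurements). The key realization, already supplied by Lemma~\ref{Lem:PermOutcome}, is that the coefficients $\beta(s,N)$ depend only on the total $s=\sum_k s_k$ and on $N$, so the inequality is invariant under all permutations of the party labels $k$. First I would therefore discard the factor $N!$ entirely: permuting observers leaves Eq.~\eqref{Eq:MABK:S1} literally unchanged.

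\textbf{Reducing outcome relabelings.} Next I would invoke Lemma~\ref{Lem:PermOutcome} to eliminate the $2^{2N}$ outcome relabelings. The lemma says any relabeling of the measurement outcomes on a subset of parties can be simulated by a permutation of the setting labels $s_k$ on the remaining parties (up to an overall sign, which is absorbed by the absolute value). Concretely, flipping both outcomes of a single party's two measurements contributes only a global sign and so is trivial; flipping the outcomes of just the $s_k=1$ measurement is reproduced by the setting permutation $s_l \to 1-s_l$ on the other parties. Hence every outcome relabeling produces an inequality already in the orbit generated by the $2^N$ setting permutations $s_k \to 1-s_k$.

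\textbf{Counting the setting orbit.} The remaining task is to show that the $2^N$ distinct choices of which subset of parties have their setting labels flipped give rise to exactly $2^N$ \emph{distinct} inequalities — i.e.\ no two of these flips coincide. Here I would argue that flipping the labels on a subset $K\subseteq\{1,\dots,N\}$ sends $s\mapsto |K| + \sum_{k\notin K}s_k - \sum_{k\in K}s_k$, and that the resulting coefficient patterns $\beta(\cdot,N)$ indexed over all $\vec{s}\in\Z_2^{\otimes N}$ are genuinely different for different $K$ (this is essentially the content of the claim in the main text that each of the $2^N$ MABK inequalities corresponds to one of the $2^N$ setting permutations). Combining the three reductions — party permutations act trivially, outcome relabelings are subsumed by setting permutations, and the $2^N$ setting permutations are mutually inequivalent — yields exactly $2^N$ inequalities equivalent to Eq.~\eqref{Eq:MABK:S1}.

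\textbf{The main obstacle} I anticipate is the final distinctness step: one must verify that no nontrivial combination of a setting flip and an outcome relabeling accidentally maps the inequality back to itself or to another flip already counted, so that the count is exactly $2^N$ rather than merely at most $2^N$. This requires checking that the orbit under the full symmetry group has size precisely $2^N$, which amounts to showing the stabilizer of Eq.~\eqref{Eq:MABK:S1} within the setting-permutation group is trivial. Given that Lemma~\ref{Lem:PermOutcome} already furnishes the crucial collapse and the $N=2$ case (CHSH) is known to have four equivalent inequalities, matching $2^2=4$, I expect the general argument to follow by the same bookkeeping with the explicit form of $\beta(s,N)$ in Eq.~\eqref{eq:betanew}.
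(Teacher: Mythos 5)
Your proposal follows essentially the same route as the paper's proof: party permutations act trivially because $\beta(s,N)$ depends only on $s=\sum_k s_k$ and $N$, outcome relabelings are absorbed into setting flips via Lemma~\ref{Lem:PermOutcome}, and the $2^N$ setting permutations $s_k\to 1-s_k$ furnish the equivalence class. The distinctness step you flag as the main obstacle is handled no more rigorously in the paper, which simply asserts that the setting flips ``give different inequalities since all the $s_k$'s are independent,'' so your account matches the paper's argument in both structure and level of detail.
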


\begin{proof}
There are $2^N$ independent permutations on the measurement settings, mapping $s_k=0\leftrightarrow s_k=1$, which give different inequalities since all the $s_k$'s are independent. By Lemma~\ref{Lem:PermOutcome}, we know that the permutation of measurement outcomes do not introduce any new inequality beyond this set of $2^N$ inequalities.

Next, note that a permutation of observers rearranges the label $k$, which corresponds to a permutation of the entries in the vector $\vec{s}=\left(s_1,\ldots,s_N\right)$. Obviously, this leaves the sum of the entries in $\vec{s}$ unchanged. Because $\beta\left(s,N\right)$ only depends on $s$ and $N$, $\beta$ is invariant under permutations of the entries in $\vec{s}$, so permutations of the observers simply rearrange the terms in Eq.~\eqref{Eq:MABK:S1}.
\end{proof}

%
%
%\bibitem{Horodecki:RMP:entanglement} R.~Horodecki \textit{et al.}, \rmp {\bf
% 81}, 865 (2009).
%
%\bibitem{P.Lougovski:PRA:034302} P.~Lougovski and S.~J.~van~Enk, Phys.~Rev.~A {\bf
% 80}, 034302 (2009).
%
%\bibitem{FullEntanglement} D.~Collins {\em et al.}, Phys.~Rev.~Lett. {\bf 88}, 170405
% (2002); M.~Seevinck and G.~Svetlichny, {\em ibid.} {\bf 89}, 060401 (2002).
%
%\bibitem{S.Boyd:Book:2004} S.~Boyd and L.~Vandenberghe, {\em Convex
% Optimization} (Cambridge, New York, 2004).

\end{document}